\newtheorem{theorem}{Theorem}[section]
\newtheorem{lemma}{Lemma}[section]
\def\buildrel#1_#2^#3{\mathrel{\mathop{\kern 0pt#1}\limits_{#2}^{#3}}}
\def\DownPrt2{\mbox{\sc Down\_Part\_2}}
\begin{document}

\title{Label Cover instances with large girth and the hardness of approximating basic $k$-spanner}

\author{
Michael Dinitz \thanks{Department of Computer Science,
Weizmann Institute.
{\tt Michael.Dinitz@weizmann.ac.il}}
\and
Guy Kortsarz
\thanks{Department of Computer Science,
Rutgers, Camden.
{\tt guyk@camden.rutgers.edu}. Partially supported by NSF award number 0829959}
\and
Ran Raz \thanks{Department of Computer Science,
Weizmann Institute.
{\tt ran.raz@weizmann.ac.il}. }
}

\date{\empty}

\maketitle
\begin{abstract}
We study the well-known \emph{Label Cover} problem under the additional requirement that problem instances have large girth.  We show that if the girth is some $k$, the problem is roughly $2^{\log^{1-\epsilon}  n/k}$ hard to approximate for all constant $\epsilon > 0$.
A similar theorem was claimed by Elkin and Peleg [ICALP 2000], but their proof was later found to have a fundamental error.  We use the new proof to show inapproximability for the basic $k$-spanner problem, which is  both the simplest problem in graph spanners and one of the few for which super-logarithmic hardness was not known.  Assuming $NP \not\subseteq BPTIME(2^{polylog(n)})$, we show that for every $k \geq 3$ and every constant $\epsilon > 0$ it is hard to approximate the basic $k$-spanner problem within  a factor better than $2^{(\log^{1-\epsilon} n) / k}$ (for large enough $n$).
A similar hardness for basic $k$-spanner was claimed by Elkin and Peleg [ICALP 2000], but the error in their analysis of Label Cover made this proof fail as well.  Thus for the problem of Label Cover with large girth we give {\em the first} non-trivial  lower bound.  For the basic $k$-spanner problem we improve the previous best lower bound of $\Omega(\log n)/k$ from~\cite{zik}.  Our main technique is subsampling the edges of $2$-query PCPs, which allows us to reduce the degree of a PCP to be essentially equal to the soundness desired.  This turns out to be enough to essentially guarantee large girth.
%
\end{abstract}

\section{Introduction}
In this paper we deal with $2$-query probabilistically checkable proofs (PCPs) and variants of the \emph{Label Cover} problem.  Label Cover was originally defined by Arora and Lund in their early survey on hardness of approximation~\cite{AL96}.  Since then, Label Cover has been widely used as a starting point when proving hardness of approximation, as it corresponds naturally to $2$-query probabilistically checkable proofs and one-round two-prover interactive proof systems. Notable examples are the reduction to the Set Cover problem
\cite{LY94,Feige}, the reduction to the maximum independent set problem
\cite{LY94,Has} and the reduction to the minimum coloring problem
\cite{FK98}.  Certain reductions from Label Cover, though, require special properties of the Label Cover instances.  So then the question becomes: is Label Cover still hard even when restricted to these instances?  For example, the famous \emph{Unique Games Conjecture} of Khot~\cite{khot} can be thought of as a conjecture that Label Cover is still difficult to approximate when the relation on each edge is required to be a bijection.  A different type of requirement is on the structure of the Label Cover graph rather than on the allowed relations; Elkin and Peleg~\cite{ELD} showed that if Label Cover (actually, a slight variant known as Min-Rep) is hard even on graphs with large girth then the \emph{basic $k$-spanner} problem is also hard to approximate.  They then gave a proof that Label Cover is indeed hard to approximate on large-girth graphs, but unfortunately this proof was later found to have a flaw (as Elkin and Peleg point out in~\cite[Section 1.3]{EP07}).  In this paper we give a completely new proof that Label Cover and Min-Rep are hard to approximate on large-girth graphs.  Our argument is based on subsampling edges of a $2$-query PCP/Label Cover instance.  Subsampling of $2$-query PCPs and Label Cover instances has been done before for other reasons (see e.g.~\cite{GS06}), but we show that the sampling probability can be set low enough to destroy most small cycles while still preserving hardness, and this technique was not previously used in this context.  Remaining cycles can then be removed deterministically.

\subsection{Label Cover and Probabilistically Checkable Proofs}
A \emph{probabilistically checkable proof} (PCP) is a string (proof) together with a verifier algorithm that checks the proof (probabilistically).  There are several important parameters of a PCP, including the following:
\begin{enumerate}
\item Completeness ($c$): the minimum probability that the verifier accepts a correct proof.  All of the PCPs in this paper have completeness $1$.
\item Soundness (or error) ($s$): the maximum probability that the verifier accepts a proof of an incorrect claim.
\item Queries: the number of queries that the verifier makes to the proof.  In this paper we will study the case when the verifier only makes $2$ queries.
\item Size: the number of positions in the proof (i.e.~the length).
\item Alphabet ($\Sigma$): the set of symbols that the proof is written in.  We will only be concerned with PCPs for which $|\Sigma|$ is at most polynomial in the size of the PCP, so we will assume this throughout.  
\end{enumerate}

For this paper we will be concerned with $2$-query PCPs, which are the special case when the verifier is only allowed to query two positions of the proof.  We will also assume (without loss of generality) that these two queries are to different parts of the proof, i.e.~there is some set of positions $A$ that can be read by the first query and some other set of positions $B$ that can be read by the second query with $A \cap B = \emptyset$.

For this type of PCP, there are two natural graphs that represent it.  The first (and simpler), which is sometimes called the \emph{supergraph}, is a bipartite graph $(A,B,E)$ in which there is a vertex for every position of the proof and an edge between two positions if there is a possibility that the verifier might query those two positions.  By our assumption, this graph is bipartite.  We also assume that the verifier chooses its query uniformly at random from these edges, which is in fact the case in the PCPs that we will use (in particular in the PCP for Max-3SAT(5) obtained by parallel repetition).  Vertices of this graph will sometimes be called \emph{supervertices}, and edges will be \emph{superedges}

The second graph can be thought of as an expansion of the supergraph in which the test the verifier does is explicitly contained in the graph.  This graph is also bipartite, with vertex set $(A \times \Sigma_A, B \times \Sigma_B)$, where $\Sigma_A$ is the alphabet used in $A$ positions of the proof and similarly for $\Sigma_B$.  There is an edge between vertices $(a, \alpha)$ and $(b, \beta)$ if the verifier might query $a$ and $b$ together (i.e.~$(a,b)$ is an edge in the supergraph) and if, upon such queries, the verifier will accept the proof if it sees values $\alpha$ in position $a$ and $\beta$ in position $b$.  We call this graph the Min-Rep graph to distinguish it from the supergraph. This is related to the work in \cite{zik}.

There is a natural correspondence between these types of PCPs and the optimization problem of Label Cover.  In Label Cover we are given a bipartite graph $G = (A, B, E)$, alphabets $\Sigma_A$ and $\Sigma_B$, and for every edge $e \in E$ a nonempty relation $\pi_e \subseteq \Sigma_A \times \Sigma_B$.  The goal is to find assignments $\gamma_A : A \rightarrow \Sigma_A$ and $\gamma_B : B \rightarrow \Sigma_B$ in order to maximize the number of edges $e = (a,b)$ for which $(\gamma_A(a), \gamma_B(b)) \in \pi_e$ (in which case we say that the edge is satisfied or covered).  It is easy to see that the existence of PCPs for NP-hard problems implies that Label Cover is hard to approximate: in particular, if we use the supergraph and set the relation to be answers on which the verifier accepts, then it is hard to distinguish instances in which at least a $c$ fraction of the edges are satisfiable (a valid proof) from instances in which at most an $s$ fraction of the edges are satisfiable (an invalid proof).  The exact nature of the hardness assumption is based on the size of the PCP: if it has size $m(n)$ (where $n$ is the size of the original problem instance) then the hardness assumption is that NP is not contained in DTIME($poly(m(n))$) (for deterministic PCP constructions and approximation algorithms) or BPTIME($poly(m(n))$) (for randomized PCP constructions or approximation algorithms).

We now describe a slight variant of Label Cover known as Min-Rep (originally defined by Kortsarz~\cite{zik}) that has been useful for proving hardness of approximation for network design problems such as spanners.  It can be thought of as a minimization version of Label Cover with the additional property that the alphabets are represented explicitly as vertices in a graph.  Consider the Min-Rep graph $G' = (A \times \Sigma_A, B \times \Sigma_B, E')$.  For every $i \in A$ let $A_i = \{(i,\alpha) \in A \times \Sigma_A\}$ be the set of vertices in the Min-Rep graph corresponding to vertex $i$ in the Label Cover graph, and similarly for $i \in B$ let $B_i = \{(i, \beta) \in B \times \Sigma_B\}$.  Our goal is to choose a set $S$ of vertices of $G'$ of minimum size so that for every $(i,j) \in E$ there are vertices $(i, \alpha)$ and $(j, \beta)$ in $S$ that are joined by an edge in $E'$.  Such a set is called a \emph{REP-cover}, and the vertices in it are called \emph{representatives}.  Less formally, we can think of Min-Rep as being the problem of assigning to every vertex in the supergraph a \emph{set} of labels/representatives (unlike Label Cover, in which only a single label is assigned) so that for every superedge $(a,b)$ there is at least one label assigned to $a$ and at least one label assigned to $b$ that satisfy the relation $\pi_{(a,b)}$.  Note that in the Min-Rep graph the number of vertices is $|A| \cdot |\Sigma_A| + |B| \cdot |\Sigma_B|$, which means that the size of a Min-Rep instance might be larger than the size of the associated Label Cover instance.  The \emph{supergirth} of a Min-Rep graph is just the girth of the supergraph, i.e.~the girth of the associated Label Cover instance.

Two parameters of PCPs/Label Cover that will be important for us are the \emph{degree} and the \emph{girth}.  The degree of a PCP is the maximum degree of a vertex in the supergraph / associated Label Cover instance.  Similarly, the girth of a PCP is the girth in the supergraph / associated Label Cover instance (recall that the girth of a graph is the length of the smallest cycle).

\subsection{The basic $k$-spanner problem and previous work}

The {\em basic $k$-spanner problem}, also called the {\em minimum size $k$-spanner problem},
is the second main subject in this paper.  In this problem we are given an undirected, unweighted graph $G$ and are asked to find the subgraph $G' = (V,E')$ with the minimum number of edges with the property that
\begin{equation} \label{eq:spanner_def}
\frac{dist_{G'}(u,v)}{dist_G(u,v)}\leq k, \mbox{ for every two vertices }
u,v\in V.
\end{equation}
In the above, the distance between two vertices is just the number
of edges in the shortest path between the two vertices, and $dist_H$ is the distance function on a graph $H$.  Any subgraph $G'$ satisfying \eqref{eq:spanner_def} is called a \emph{$k$-spanner} of $G$, and our goal is to find the $k$-spanner with the fewest edges.

Elkin and Peleg~\cite{ELD} proved that if Min-Rep is hard to approximate when restricted to instances with large supergirth, then the basic $k$-spanner problem is also hard to approximate (the required supergirth of the Min-Rep instances depends on the value $k$).  For completeness, we shall later describe their reduction and proof.

We discuss some previous work on spanner problems.
The concept of graph spanners was first invented first by \cite{first}
in a geometric context. To the best of our knowledge
the spanner problem on general graphs was first invented indirectly
by Peleg and Upfal~\cite{PUP} in their work on small routing
tables. This problem was first explicitly
defined in \cite{pel10}, albeit several papers
claim that this problem was invented in
\cite{PelS} (the two journal versions are from the same year).

Spanners appear in remarkably many applications. Peleg and Upfal~\cite{PUP} showed an application of spanners to maintaining
small routing tables. For further applications toward
this subject see \cite{Cowen, Cowen1,Z1,TZ}.
In \cite{pel10} a relation between
sparse spanners and synchronizers for distributed networks was found.
In \cite{edit1,edit2,el1,Fin}
applications of spanners to parallel, distributed,
and streaming algorithms for shortest paths are described.
For applications of spanners to distance oracles see \cite{sure,MKUZ}.
For applications of spanners in property testing and related subjects see
\cite{BGJRW,arnab,MJSR,RSF}.

For completeness, we note that the basic $k$-spanner problem has been solved for the special case of $k=2$: there is an $O(\log n)$ approximation~\cite{guys1}, and that is the best possible~\cite{zik}.

\subsection{Our results}

All of our results hold for large $n$, so throughout this paper we will assume that $n$ is sufficiently large.  Our first result is on the hardness of approximating Label Cover with large girth:
\begin{theorem}
Assuming $NP \not\subseteq BPTIME(2^{polylog(n)})$, for any constant $\epsilon > 0$ and for $3 \leq k \leq \log^{1-2\epsilon} n$ there is no polynomial-time algorithm that approximates Label Cover with girth greater than $k$ to a factor better than $2^{(\log^{1-\epsilon} n)/k}$.
\end{theorem}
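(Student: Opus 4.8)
The plan is to produce the hard instances by \emph{subsampling the superedges} of a $2$-query PCP obtained from parallel repetition, and then deleting the few short cycles that survive. I would start from the $2$-query PCP / Label Cover for $3$SAT given by the PCP theorem, made $D$-regular for a fixed constant $D\ge 2$ by standard transformations, with completeness $1$ and soundness $1-\delta_0$ for a fixed $\delta_0>0$; applying $\ell$-fold parallel repetition yields a Label Cover instance $\mathcal{L}_\ell$ with completeness $1$, soundness $s=2^{-\Omega(\ell)}$, alphabets of size $2^{O(\ell)}$, size $n_0^{O(\ell)}$ where $n_0$ is the $3$SAT size, and a $D^\ell$-regular supergraph, so the degree is $\Delta=2^{\Theta(\ell)}$. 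The crucial point is that parallel repetition produces a gap $1/s=2^{\Omega(\ell)}$ and a degree $\Delta=2^{\Theta(\ell)}$ that are \emph{both} much larger than the modest target hardness factor $g:=2^{(\log^{1-\epsilon}n)/k}$, and this surplus is exactly what we will spend on girth. From $\mathcal{L}_\ell$ I form $\mathcal{L}'$ by keeping each superedge independently with probability $p$ (to be fixed) and then deterministically deleting one edge from each surviving cycle of length $\le k$. Completeness is immediate: a proof satisfying every superedge of $\mathcal{L}_\ell$ satisfies every surviving one, and deleting edges never decreases completeness.

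The technical core is the soundness of $\mathcal{L}'$. Let $E$ be the superedge set of $\mathcal{L}_\ell$, $E'$ the subsampled set, $N$ the number of supervertices (so $|E|=\Theta(N\Delta)$), and suppose there are $Z\le 2^{O(\ell N)}$ assignments. A fixed assignment $\sigma$ satisfies at most $s|E|$ superedges of $\mathcal{L}_\ell$, so the number of surviving superedges it satisfies is dominated by $\mathrm{Bin}(s|E|,p)$ with mean $\le sp|E|$. Since $|E'|$ concentrates around $p|E|$ (Chernoff, as $p|E|$ is large), a Chernoff bound gives: if $s$ is at most a small constant times $1/g$, the probability that $\sigma$ satisfies more than a $1/(2g)$ fraction of $E'$ is $\exp(-\Omega(p|E|/g))$. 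Union-bounding over all $Z$ assignments this is $o(1)$ as soon as $p|E|\gtrsim g\log Z$; writing $d':=p\Delta$ for the new degree and using $\log Z=O(\ell N)$, this reads simply $d'\gtrsim g\ell$. Crucially, $d'$ need only be about the target factor $g$ (up to the $\mathrm{polylog}(n)$ factor $\ell$), \emph{not} about $1/s$ --- after subsampling the degree is essentially equal to the soundness desired.

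For the girth I would use a first-moment bound: the number of cycles of length exactly $j$ in $\mathcal{L}_\ell$ is at most $N\Delta^{j-1}$, so the expected number surviving is $\le N\Delta^{j-1}p^{\,j}=Np\,(d')^{j-1}$, and summing over $j\le k$ gives $O\!\big(Np\,(d')^{k-1}\big)$. By Markov's inequality (together with the concentration of $|E'|$), with constant probability the number of surviving cycles of length $\le k$ is at most $|E'|/2$, provided $(d')^{k-1}\lesssim\Delta$, i.e.\ $d'\lesssim\Delta^{1/(k-1)}$. A value of $p$, hence of $d'$, meeting both constraints $g\ell\lesssim d'\lesssim\Delta^{1/(k-1)}$ exists precisely when $\log\Delta$ exceeds about $(k-1)\log(g\ell)\approx k\log g=\log^{1-\epsilon}n$; the lower-order term coming from $\log\ell$ is absorbed here, which is where the hypothesis $k\le\log^{1-2\epsilon}n$ enters. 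Since $\log\Delta=\Theta(\ell)$, this holds once $\ell=C'\log^{1-\epsilon}n$ for a large enough constant $C'$ (depending on $\delta_0$ and $D$). Fixing $\ell$ this way and enforcing self-consistency of $\log n=\Theta(\ell\log n_0)$ forces $n_0=2^{\Theta(\log^{\epsilon}n)}$ and makes the whole (randomized) reduction run in time $n_0^{O(\ell)}=2^{\mathrm{polylog}(n_0)}$, which is exactly why the hypothesis is $NP\not\subseteq BPTIME(2^{\mathrm{polylog}(n)})$.

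To conclude, condition on the constant-probability event that simultaneously $|E'|\ge\tfrac12 p|E|$, no assignment satisfies a $1/(2g)$ fraction of $E'$, and there are at most $|E'|/2$ cycles of length $\le k$; then greedily delete one edge from each such short cycle. This removes at most $|E'|/2$ edges, keeps completeness $1$, and at most doubles the soundness, so the resulting Label Cover instance has girth $>k$, size $n$, and gap $2^{\Omega((\log^{1-\epsilon}n)/k)}$ --- running the construction with $\epsilon/2$ in place of $\epsilon$ then gives hardness $2^{(\log^{1-\epsilon}n)/k}$ for large $n$. The main obstacle is the joint feasibility of the two constraints on $p$: one must verify that the degree $\Delta$ from parallel repetition really is large enough --- equivalently $\log\Delta\gg\log^{1-\epsilon}n$ --- that $p$ can be pushed low enough to make cycles of length $\le k$ rare while still leaving $\gtrsim g\log Z$ superedges, so that the Chernoff-plus-union-bound over all $2^{O(\ell N)}$ assignments goes through. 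The remaining ingredients (the concentration bounds, the greedy cycle removal, and the parameter bookkeeping against the $BPTIME$ hypothesis) are routine.
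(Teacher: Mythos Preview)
Your proposal is correct and follows essentially the same approach as the paper: apply parallel repetition first to get large degree $\Delta=2^{\Theta(\ell)}$ and small soundness $s=2^{-\Omega(\ell)}$, then subsample superedges down to degree roughly $g\ell$ (so that a Chernoff bound plus a union bound over the $2^{O(\ell N)}$ assignments still controls the soundness), and finally kill the $O(N p (d')^{k-1})$ surviving short cycles by deleting a few edges. The only cosmetic differences are that the paper fixes the starting PCP concretely (Max-3SAT(5), regularized by duplicating the two sides), parameterizes by $\alpha\approx g$ with sampling probability $p_\alpha=\alpha\log|\Sigma_A|/d$, deletes \emph{every} edge lying on a short cycle rather than one edge per cycle, and expresses $\ell$ as $\log^{1/\epsilon} n'$ in terms of the 3SAT size rather than self-consistently in terms of $n$; none of these affect the argument.
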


We also show how to adapt this hardness from Label Cover to Min-Rep, which then gives us strong hardness for basic $k$-spanner.

\begin{theorem}
\label{theorem1}
Assuming $NP \not\subseteq BPTIME(2^{polylog(n)})$, for any constant $\epsilon > 0$ and for $3 \leq k \leq \log^{1-2\epsilon} n$ there is no polynomial-time algorithm that approximates Min-Rep with supergirth greater than $k$ to a factor better than $2^{(\log^{1-\epsilon} n)/k}$.
\end{theorem}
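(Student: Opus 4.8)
The plan is to obtain this from the Label Cover hardness of Theorem~1.1 via the standard reduction to Min-Rep: given a Label Cover instance, its Min-Rep graph \emph{is} a Min-Rep instance, and its supergirth is by definition the girth of the Label Cover instance. Concretely, I would start from an instance $\Phi=(A,B,E,\Sigma_A,\Sigma_B,\{\pi_e\})$ produced by Theorem~1.1 with girth greater than $k$, and output the Min-Rep graph $G'=(A\times\Sigma_A,\,B\times\Sigma_B,\,E')$ together with its groups $A_i$ and $B_j$. Then the supergirth of $G'$ equals the girth of $\Phi$, hence is greater than $k$; and since $|\Sigma_A|,|\Sigma_B|$ are polynomial in the size of $\Phi$, the size of $G'$ is polynomial in that of $\Phi$. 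The reduction is deterministic, so a polynomial-time approximation for Min-Rep yields a polynomial-time approximation for Label Cover with girth $>k$ on an instance of only polynomially larger size; thus the complexity assumption $NP\not\subseteq BPTIME(2^{polylog(n)})$ is exactly what is inherited from Theorem~1.1.

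For \emph{completeness}, if $\Phi$ is fully satisfiable by assignments $\gamma_A,\gamma_B$, then $S=\{(i,\gamma_A(i)):i\in A\}\cup\{(j,\gamma_B(j)):j\in B\}$ is a REP-cover of $G'$ of size $|A|+|B|$. For \emph{soundness}, I would run the usual random-representative argument. Given any REP-cover $S$, write $S_i$ for the representatives of supervertex $i$ in $S$ and set $\sigma=|S|/(|A|+|B|)$; choose, independently for each supervertex $i$, a uniformly random element of $S_i$. For each superedge $(i,j)$ the set $S$ contains a satisfying pair of representatives, so the resulting assignment satisfies $(i,j)$ with probability at least $1/(|S_i|\,|S_j|)$. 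Hence, if $N$ denotes the number of satisfied superedges,
\[
\mathbb{E}[N]\ \ge\ \sum_{(i,j)\in E}\frac{1}{|S_i|\,|S_j|}\ \ge\ \sum_{(i,j)\in E}\frac{4}{(|S_i|+|S_j|)^2}\ \ge\ \Omega\!\left(\frac{|E|}{\sigma^2}\right),
\]
where the last step is Jensen's inequality for the convex map $t\mapsto t^{-2}$, using that $|S_i|+|S_j|$ averages to $\Theta(\sigma)$ over a uniformly random superedge (this is the one place that uses that the instances of Theorem~1.1 are essentially balanced and bi-regular, which we may assume since it holds for the instances produced there). Thus $\Phi$ has an assignment satisfying $\Omega(|E|/\sigma^2)$ superedges, so if $\Phi$ lies in the NO case of Theorem~1.1 with gap $g$ (at most a $1/g$ fraction of superedges simultaneously satisfiable), every REP-cover of $G'$ has size $\Omega(\sqrt g)\,(|A|+|B|)$. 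Together with completeness this gives a hardness gap of $\Omega(\sqrt g)$ for Min-Rep with supergirth $>k$.

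It remains to match the claimed quantitative bound, which I would do by invoking Theorem~1.1 with the constant $\epsilon/2$ in place of $\epsilon$: this produces $g=2^{(\log^{1-\epsilon/2}n)/k}$, and
\[
\Omega(\sqrt g)\ =\ \Omega\!\left(2^{(\log^{1-\epsilon/2}n)/(2k)}\right)\ \ge\ 2^{(\log^{1-\epsilon}n)/k}
\]
for all sufficiently large $n$, because $\tfrac{1}{2k}\log^{1-\epsilon/2}n-\tfrac1k\log^{1-\epsilon}n\to\infty$ in the range $k\le\log^{1-2\epsilon}n$; the same slack also swallows the hidden constant and the difference between $\log(\text{size of }\Phi)$ and $\log(\text{size of }G')$, which differ only by a constant factor. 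The range $3\le k\le\log^{1-2\epsilon}n$ stays within the range allowed by Theorem~1.1 after this rescaling.

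The main obstacle is the soundness step: the random-representative argument naturally produces the product $|S_i|\,|S_j|$ in the denominator, which is what forces the quadratic loss ``gap $g$ becomes gap $\sqrt g$.'' This is exactly why the theorem is phrased with an arbitrary constant $\epsilon$ rather than a fixed one --- the quadratic loss is absorbed by shrinking $\epsilon$ before applying Theorem~1.1. A secondary technical point is ensuring the Label Cover instances inherited from Theorem~1.1 are sufficiently balanced and regular for the Jensen step to yield the clean $\Omega(|E|/\sigma^2)$ bound; if they are not, the imbalance contributes an additional subpolynomial factor that is again absorbed after rescaling $\epsilon$.
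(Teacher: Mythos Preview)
Your approach---the random-representative argument turning a Label Cover gap $g$ into a Min-Rep gap $\Omega(\sqrt{g})$, with the square-root loss absorbed by shrinking $\epsilon$---is exactly the mechanism the paper uses. The paper's Lemma~\ref{lem:MR_reduction} carries out essentially the same analysis (using a Markov/truncation argument on the set of supervertices with at most $18\beta$ representatives rather than your Jensen step, but to the same effect).

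The one real divergence is the order of operations, and it matters precisely at the ``secondary technical point'' you flag. You propose to black-box Theorem~1.1 and then run the random-representative argument on its output instances $G'_\alpha$. But those instances have already had the bad (short-cycle) superedges deleted, and the statement of Theorem~1.1 carries no regularity guarantee; after removing $O(n)$ edges from an approximately regular graph, individual vertices can have arbitrarily small degree, so neither your Jensen step nor the paper's truncation step is directly justified. The paper avoids this by \emph{not} deriving the Min-Rep result from Theorem~1.1. Instead it proves both theorems in parallel from the construction: it runs the random-representative argument on $H_\alpha$ (the Min-Rep graph of the sampled instance $G_\alpha$ \emph{before} edge removal), where approximate regularity is available via Lemma~\ref{lem:sampling_regularity}, and only afterward removes the $O(n)$ bad superedges, observing that each removal decreases the optimal REP-cover by at most $2$ (Theorem~\ref{thm:main1}). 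Your route can be repaired by the same trick, but as written the regularity handwave is the one genuine gap.
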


\begin{theorem}
\label{theorem2}
Assuming $NP \not\subseteq BPTIME(2^{polylog(n)})$, for any constant $\epsilon > 0$ and for $3 \leq k \leq \log^{1-2\epsilon} n$ there is no polynomial-time algorithm that approximates the basic $k$-spanner problem to a factor better than $2^{(\log^{1-\epsilon} n) / k}$.
\end{theorem}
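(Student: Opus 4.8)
The plan is to prove Theorem~\ref{theorem2} by composing the Min-Rep hardness of Theorem~\ref{theorem1} with the reduction of Elkin and Peleg~\cite{ELD} from Min-Rep with large supergirth to basic $k$-spanner. The flaw discovered in~\cite{ELD} lay entirely in their analysis of Label Cover/Min-Rep; their reduction \emph{from} large-supergirth Min-Rep \emph{to} basic $k$-spanner is itself correct, so it suffices to feed that reduction the instances produced by Theorem~\ref{theorem1}. Concretely I would first (re)present the reduction $\Phi$. Given a Min-Rep instance $\mathcal I$ with supergraph $(A,B,E)$, alphabets $\Sigma_A,\Sigma_B$, relations $\{\pi_e\}$, and supergirth greater than $k$ (or, if the reduction requires it, greater than $ck$ for a fixed constant $c$), $\Phi$ outputs a graph $H$ of size $\poly(|A|\,|\Sigma_A|+|B|\,|\Sigma_B|)$ built from: a \emph{center} vertex $s_i$ for each supernode $i\in A\cup B$; for each $i$ a \emph{cloud} equal to the Min-Rep vertices $A_i$ (resp.\ $B_i$) joined to $s_i$ by \emph{spoke} edges; for each superedge $(i,j)\in E$ the bundle of Min-Rep edges between $A_i$ and $B_j$ realizing $\pi_{(i,j)}$; and a low-cost \emph{base} subgraph whose purpose is to make every distance demand of $H$ that does not arise from a superedge automatically satisfied at stretch $k$, while contributing only a lower-order additive number of edges. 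The binding demands are then exactly the superedges, each of which requires the centers $s_i,s_j$ of an edge $(i,j)\in E$ to stay at distance at most $k$ in the spanner.

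Next, the completeness direction: from a REP-cover $S$ of $\mathcal I$ of size $t$ I would build a $k$-spanner of $H$ consisting of the base subgraph, the spokes from each $s_i$ to the representatives of $i$ in $S$, and, for each superedge $(i,j)$, one Min-Rep edge between a representative of $i$ and one of $j$ witnessing coverage; the canonical short path $s_i$–rep–$\cdots$–rep–$s_j$ then satisfies every superedge demand. This gives a $k$-spanner with at most $(\text{base edges})+O(t)+|E|$ edges; with the gadget normalized appropriately this is $(1+o(1))$ times an unavoidable lower bound, and the dependence on $t$ is linear.

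For the soundness direction — where the large supergirth is essential — suppose $H$ has a $k$-spanner $H'$ smaller than the completeness bound by a factor $\rho=2^{(\log^{1-\epsilon}n)/k}$. For each superedge $(i,j)\in E$, $H'$ contains a path of length at most $k$ between $s_i$ and $s_j$. The key claim is that, because the supergraph has girth greater than $k$, any such path must have the canonical local form $s_i$–$(i,\alpha)$–$\cdots$–$(j,\beta)$–$s_j$ lying inside the clouds of $i$ and $j$ and using spoke and matching edges only: a path that leaves this local structure projects to a closed walk in the supergraph that, by a short-path/cycle argument, contains a cycle of length at most $k$, a contradiction. Hence $H'$ contains spokes at $i$ and at $j$ and a short Min-Rep path between them, and — ruling out detours through foreign clouds, again by girth — this forces $(\alpha,\beta)\in\pi_{(i,j)}$. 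Collecting, for each supernode $i$ the set of labels $\alpha$ whose spoke lies in $H'$ forms a REP-cover whose size is at most the number of spokes in $H'$, hence at most $|E(H')|$ minus the base edges; by the averaging "small spanner $\Rightarrow$ on average few representatives per supernode", this REP-cover has size below $\rho\,(|A|+|B|)$, contradicting the NO case of Theorem~\ref{theorem1}.

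Finally, the parameter bookkeeping. Theorem~\ref{theorem1}, under $NP\not\subseteq BPTIME(2^{polylog(n)})$, supplies a Min-Rep instance of some size $N$ with supergirth greater than $k$ (or $ck$) on which distinguishing a REP-cover of size essentially $|A|+|B|$ from one of size $\rho\,(|A|+|B|)$ is hard, with $\rho=2^{(\log^{1-\epsilon'}N)/k}$. Since $\Phi$ is polynomial, $n=|V(H)|=\poly(N)$, so $\log n=\Theta(\log N)$; choosing $\epsilon'<\epsilon$ absorbs both the $\Theta(\cdot)$ and any constant factor $c$ in the girth requirement (this is where "$n$ sufficiently large" is used, as a constant is eventually dominated by $\log^{\epsilon-\epsilon'}n$), and the admissible range $3\le k\le\log^{1-2\epsilon}n$ descends from the corresponding range in Theorem~\ref{theorem1}. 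This yields hardness of approximating basic $k$-spanner within $2^{(\log^{1-\epsilon}n)/k}$ under the same hypothesis. I expect the main obstacle to be exactly the soundness analysis of $\Phi$: making rigorous that every stretch-$k$ path witnessing a superedge demand is forced into the canonical local form (so that it genuinely exposes a covering pair of representatives) and that no globally cheaper routing exists, and, in tandem, designing the base subgraph so that its additive contribution stays below the multiplicative gap $\rho$ inherited from Theorem~\ref{theorem1}. Since Theorem~\ref{theorem1} already provides the quantitatively strong large-supergirth Min-Rep hardness, the remaining work is a careful re-derivation of the (correct) Elkin–Peleg reduction with explicit attention to these two points.
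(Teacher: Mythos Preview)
Your high-level plan is exactly the paper's: feed the Min-Rep hardness of Theorem~\ref{theorem1} (in the form of Theorem~\ref{maintheorem}) into the Elkin--Peleg reduction from large-supergirth Min-Rep to basic $k$-spanner, then do parameter bookkeeping. But the concrete reduction you sketch is too stripped down, and as written it would not transfer the gap. Two ingredients in the paper's construction are missing from yours, and both are essential.

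\textbf{Replication.} In your gadget there is one center $s_i$ per supernode and one demand edge $(s_i,s_j)$ per superedge. Then nothing stops the spanner from simply keeping all the center--center edges: that costs only $|\tilde E|$, satisfies every superedge demand without using a single spoke, and your soundness extraction of a REP-cover from the spokes in $H'$ collapses. The paper avoids this by making $x=n^2/\tilde n$ independent copies of the demand structure (the vertex sets $S,T$ and the edge sets $E^p_{\tilde G}$ for $p\in[x]$), all sharing the same Min-Rep cloud. Keeping the demand edges directly now costs $x|\tilde E|$, which is enormous; spanning them via canonical paths through the cloud is cheaper. On the soundness side the paper extracts, for each $p$, a candidate REP-cover $U^p$ from the $p$-th copy's spokes and takes the smallest by averaging (Lemma~\ref{lem:reduction1}); this division by $x$ is precisely what makes the additive base cost (which must include all of $E$, of size up to $n^2=x\tilde n$) vanish against the multiplicative gap. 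You flag ``designing the base subgraph so that its additive contribution stays below the multiplicative gap'' as an anticipated obstacle; replication is the answer, and without it the reduction does not work.

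\textbf{Padding the canonical path to length exactly $k$.} Your canonical path $s_i$--$(i,\alpha)$--$(j,\beta)$--$s_j$ has length $3$ regardless of $k$, so for $k>3$ the girth hypothesis is idle: a stretch-$k$ path could tour several superedges before returning, and your projection argument does not force it into the local two-cloud form. The paper replaces each center by a path of $k_A=\lfloor(k-1)/2\rfloor$ (resp.\ $k_B$) auxiliary vertices and places the demand edge between the far ends $s^p_{i,k_A}$ and $t^p_{j,k_B}$; a canonical path then has length exactly $k$, so there is no slack. The case analysis of Lemma~\ref{lem:span1} then shows that any non-canonical spanning path either exceeds length $k$ or projects to a cycle of length at most $k+1$ in the supergraph, so one genuinely needs supergirth at least $k+2$ (not merely $>k$). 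The paper also needs an intermediate normalization step (Lemma~\ref{lem:proper}) converting an arbitrary spanner to a \emph{proper} one that contains no $E_{\tilde G}$ edges, at a constant-factor blowup, before the REP-cover extraction.
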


This is the same hardness for basic $k$-spanner as was claimed by~\cite{ELD}, and so (as they note) is tight at least for the case of $k = \log^{\mu} n$ where $\mu$ is any constant in $(0,1)$.  This is because the best known approximation algorithm for basic $k$-spanner~\cite{ADDJ} gives a ratio of essentially $n^{\Theta(1/k)} = 2^{\Theta(\log^{1-\mu} n)}$, while Theorem~\ref{theorem2} implies hardness of $2^{\log^{1-\mu-\epsilon} n}$ for arbitrarily small constant $0 < \epsilon < (1-\mu)/2$.  

%

\subsection{The error in \cite{ELD}}
To the best of our knowledge the question answered in
Theorem \ref{theorem1} regarding
the hardness of Min-Rep with large supergirth
was first presented in ICALP 2000 by Elkin and Peleg \cite{ELD}.
In \cite{ELD} the authors tried to create Min-Rep instances with large
supergirth that are also hard to approximate as follows.
They started with a $3$-Sat($5$)
graph with large supergirth, which is relatively easy to obtain by standard transformations of 3CNF formulae; the
supergirth is with respect to the graph that has
the clauses and variables as vertices, with an edge between a clause and a variable if the variable is in the clause. They then applied
the
parallel repetition theorem \cite{R} and
claimed to boost the hardness while maintaining the large supergirth.
This reduction is incorrect (as Elkin and Peleg acknowledge in~\cite{EP07}), as non-simple cycles such as paths
in the original graph become simple cycles after parallel repetition is applied.
In fact the supergirth in the construction of \cite{ELD} is $4$, no matter what the initial supergirth before parallel repetition is, and thus \cite{ELD} does not imply any
hardness whatsoever for the large supergirth Min-Rep
problem.  For the interested reader, in the conference version of~\cite{ELD} it is Lemma~13 which is incorrect.

Regarding the hardness of basic $k$-spanner, in \cite{ELD}
a reduction is given from Min-Rep with supergirth larger than $k+1$ to the basic $k$-spanner problem for $k\geq 3$.  While this reduction is correct, since the hardness proof for large supergirth Min-Rep in \cite{ELD} is incorrect the reduction does not actually imply any hardness for basic $k$-spanner.

The actual situation before our paper
is as follows. No hardness whatsoever
was known for the
Min-Rep with large supergirth problem; our hardness result comes
12 years after this question was first posed.
Regarding lower bounds for the basic $k$-spanner problem, this question was first
raised in \cite{zik} in APPROX 1998.
The best lower bound known (before our paper) was $\Omega(\log n)/k$. The improved hardness we give comes
14 years after this question was first posed.

\subsection{Some remarks on our techniques}

The approach taken by Elkin and Peleg was to first create an instance with large supergirth, and then apply parallel repetition to boost the gap.  Unfortunately, parallel repetition destroys the supergirth, bringing it back down to $4$.  Our idea is to apply parallel repetition \emph{first}, boosting the gap, and then randomly sample superedges to sparsify the supergraph.
It turns out, perhaps surprisingly,
 that to a certain extent these random choices {\em do not decrease the gap}.
This may seem non-intuitive at first as usually
the gap is closely related to superdegree
and a smaller superdegree would imply a smaller gap.
This may have been one of the obstacles
in finding a lower bound for Min-Rep with large supergirth.
However, it turns out that it is possible to keep the gap despite the smaller superdegree.

The hardness that we derive this way is actually for the
Label Cover problem with large girth and not for the Min-Rep problem with large supergirth.
The standard reduction from Label Cover to Min-Rep \cite{zik}
entails duplications of many super vertices.
This is needed in order to ensure regularity in the Min-Rep graph, which is used to ensure that removing a $\mu$ fraction of the supervertices
will imply a removal of at most a $\mu$ fraction of the superedges.
In \cite{zik} this duplication is done after the parallel repetition step,
as the supergirth was not an important quantity.   However, such duplications add many cycles of length $4$ in the supergraph.  We handle this difficulty by performing duplication {\em before} we apply parallel repetition.  It is clear that parallel repetition applied to a regular graph maintains the regularity, which is the property that we need to go from the maximization objective of Label Cover to the minimization objective of Min-Rep.

\section{Sampling Lemma for $2$-query PCPs}

We begin with our general $2$-query PCP sampling lemma.  We remark that similar subsampling techniques have been used before (notably by Goldreich and Sudan~\cite{GS06} to give almost-linear size PCPs), but we specialize the technique with an eye towards giving a tradeoff between the soundness and the girth.  Since we will only be concerned with regular PCPs, we will phrase it for the special case when the supergraph has $|A| = |B| = n/2$ and is regular with degree $d$.  We will assume without loss of generality that $|\Sigma_A| \geq |\Sigma_B|$. Given such a PCP verifier (i.e.~Label Cover instance) $G = (A,B, E)$, let $G_{\alpha}$ be the verifier/instance that we get by sub-sampling the edges with probability $p_{\alpha} = \frac{\alpha \log |\Sigma_A|}{d}$, i.e.~every edge $e \in E$ is included in $G_{\alpha}$ independently with probability $p_{\alpha}$.

\begin{lemma} \label{lem:sample_soundness1}
Let $G = (A,B,E)$ be a $2$-query PCP verifier/Label Cover instance with completeness $1$ and soundness $s$ in which $|A| = |B| = n/2$, every vertex has degree $d$, and $|\Sigma_A| \geq |\Sigma_B|$.  Let $1 \leq \alpha \leq 1/s$.  Then with high probability $G_{\alpha}$ is a PCP verifier with completeness $1$ and soundness at most $4e / \alpha$.
\end{lemma}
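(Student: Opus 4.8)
The plan is to establish the soundness bound via a union bound over all pairs of assignments, with Chernoff estimates controlling separately the numerator (satisfied surviving superedges) and the denominator (total surviving superedges) of the soundness ratio; the completeness claim comes for free. Indeed, $G_\alpha$ keeps only a subset of the superedges of $G$, so any proof/assignment that covers every superedge of $G$ covers every surviving superedge of $G_\alpha$, and the sub-sampled verifier --- which picks a uniformly random surviving superedge --- accepts such a proof with probability $1$. (The lower-tail estimate below shows $G_\alpha$ has at least one edge with high probability, so this is meaningful.)

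First I would fix a pair of assignments $\gamma=(\gamma_A,\gamma_B)$ and let $S_\gamma\subseteq E$ be the set of superedges it covers; by the soundness of $G$ we have $|S_\gamma|\le s|E|$. Since $G$ is $d$-regular with $|A|=|B|=n/2$ we have $|E|=nd/2$, so $\mathbb{E}\,|E(G_\alpha)| = p_\alpha|E| = \tfrac12\alpha n\log|\Sigma_A|$ and $\mathbb{E}\,|S_\gamma\cap E(G_\alpha)| = p_\alpha|S_\gamma|\le p_\alpha s|E|$. The quantity $|S_\gamma\cap E(G_\alpha)|$ is a sum of $|S_\gamma|$ independent $0/1$ variables, so I can apply the Chernoff upper tail with threshold $t:=\tfrac{2e}{\alpha}\,p_\alpha|E| = e\,n\log|\Sigma_A|$. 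Because $|S_\gamma|\le s|E|$ and $\alpha\le 1/s$, the ratio $t/\mathbb{E}\,|S_\gamma\cap E(G_\alpha)|$ is at least $2e/(\alpha s)\ge 2e$, and in this regime the bound $\Pr[X\ge a]\le (e\,\mathbb{E}X/a)^{a}$ gives $\Pr[\,|S_\gamma\cap E(G_\alpha)|\ge t\,]\le 2^{-t}$.

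Next I would union-bound over assignments. There are at most $|\Sigma_A|^{n/2}\,|\Sigma_B|^{n/2}\le|\Sigma_A|^{n}$ of them, using $|\Sigma_B|\le|\Sigma_A|$, so with probability at least $1-|\Sigma_A|^{n}\,2^{-t}=1-|\Sigma_A|^{n}\,2^{-e n\log|\Sigma_A|}=1-o(1)$ (the exponent is $-\Omega(n\log|\Sigma_A|)$ regardless of the base of the logarithm, since $|\Sigma_A|\ge 2$ and $e>1$) \emph{every} assignment covers fewer than $t$ surviving superedges. A separate Chernoff lower-tail bound gives $|E(G_\alpha)|\ge\tfrac12 p_\alpha|E|$ with probability $1-e^{-\Omega(p_\alpha|E|)}=1-e^{-\Omega(n)}$, using that $p_\alpha|E|=\Omega(n)$ grows with $n$. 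On the intersection of these two high-probability events the soundness of $G_\alpha$ is at most $t/(\tfrac12 p_\alpha|E|)=(2e/\alpha)/(1/2)=4e/\alpha$, as claimed. (I am tacitly assuming $p_\alpha\le 1$, i.e.\ $d\ge\alpha\log|\Sigma_A|$; this holds in the regime in which the lemma is applied, as the superdegree produced by parallel repetition is large, and if $p_\alpha\ge 1$ the statement is vacuous with $G_\alpha=G$.)

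I do not expect any single estimate to be the obstacle --- the Chernoff bounds and the counting of assignments are routine. The crux is rather the calibration of the sampling probability: the union bound costs a factor $|\Sigma_A|^{n}$, so it can be absorbed only if having $\Omega(n\log|\Sigma_A|)$ covered surviving superedges is an unlikely event, which forces $p_\alpha|E|=\Omega(n\log|\Sigma_A|)$ and hence $p_\alpha=\Omega(\log|\Sigma_A|/d)$. Thus $p_\alpha=\Theta(\log|\Sigma_A|/d)$ is essentially the least sampling probability for which the argument can go through, which is precisely why the surviving soundness is $\Theta(1/\alpha)$ and why, once $\alpha$ is pushed to its largest permitted value $\alpha\approx 1/s$, the surviving superdegree drops to roughly $\alpha\log|\Sigma_A|$ --- a near-minimal degree, which is what will later let us argue the sparsified supergraph has essentially no short cycles.
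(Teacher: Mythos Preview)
Your proposal is correct and is essentially identical to the paper's own proof: fix a labeling, Chernoff-bound the number of surviving satisfied edges above the threshold $en\log|\Sigma_A|$, union-bound over the at most $|\Sigma_A|^n$ labelings, and separately Chernoff-bound the total number of surviving edges below $\tfrac12 p_\alpha|E|$, combining to get soundness $4e/\alpha$. Your added remarks on the calibration of $p_\alpha$ and the implicit assumption $p_\alpha\le 1$ are accurate and go slightly beyond what the paper spells out, but the argument itself is the same.
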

\begin{proof}
It is obvious that $G_{\alpha}$ has completeness $1$ with probability $1$, since any valid labeling/proof of $G$ is also valid for $G_{\alpha}$.  To bound the soundness, first fix a proof / labeling.  We know that in the original verifier, at most an $s$ fraction of the edges are satisfied.  After sampling, the expected number of satisfied edges is at most
\begin{equation*}
p_{\alpha} s |E| \leq \frac{|E| \log |\Sigma_A|}{d} = \frac{n}{2} \log |\Sigma_A|.
\end{equation*}
Since the sampling decisions are independent we can apply a Chernoff bound (see e.g.~\cite[Theorem 1.1]{DP09}), giving us that the probability that more than $en \log |\Sigma_A|$ edges are satisfied is at most $2^{-en \log |\Sigma_A|} = |\Sigma_A|^{-en}$.  But the total number of possible proofs is at most $|\Sigma_A|^{n/2} |\Sigma_B|^{n/2} \leq |\Sigma_A|^n$.  So by a union bound, the probability that any labeling satisfies more than $en \log |\Sigma_A|$ edges is at most $|\Sigma_A|^{-(e-1)n} \leq 2^{-n}$, which is negligible.  But the expected total number of edges after sampling is $p_{\alpha}|E| = \frac{n}{2} \alpha \log |\Sigma_A|$, and so another Chernoff bound implies that with high probability the number of edges after sampling is at least $(n/4) \alpha \log |\Sigma_A|$.  Thus with high probability no proof is accepted with probability more than $(en \log |\Sigma_A|) / ((n/4) \alpha \log |\Sigma_A|) = 4e / \alpha$.
\end{proof}

Lemma~\ref{lem:sample_soundness1} shows that we can  sample edges so that the average degree is about $\alpha \log |\Sigma|$ without hurting the soundness too much (in particular, the soundness becomes basically $1/\alpha$).  Note that if we set $\alpha = 1/s$ this allows us to reduce the average degree to basically $(1/s) \log |\Sigma_A|$ (a possibly significantly reduction) without affecting the soundness by more than a constant.  We would like to claim that this lets us increase the girth, but at this point we will merely prove that any edge is \emph{unlikely} to be in short cycles.  Later we will deterministically remove edges that take part in short cycles, but since that might destroy approximate-regularity (which is necessary for our reduction to Min-Rep) we put it off until later.

\begin{lemma} \label{lem:edge_cycle}
Fix an edge $(u,v) \in G$.  Conditioned on $(u,v) \in G_{\alpha}$, the probability that $(u,v)$ is in a cycle in $G_{\alpha}$ of length at most $k$ is at most $\frac{2(\alpha \log |\Sigma_A|)^{k-1}}{d}$.
\end{lemma}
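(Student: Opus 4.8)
The plan is to bound the probability by a union bound over all potential short cycles through the fixed edge $(u,v)$. The event that $(u,v)$ lies in a cycle of length at most $k$ in $G_\alpha$ (conditioned on $(u,v)\in G_\alpha$) is the event that there exists some $j$ with $2 \le j \le k-1$ and some path of length $j$ in the original supergraph $G$ from $u$ to $v$ (other than the edge $(u,v)$ itself) all of whose $j$ edges survive the sampling. So the first step is to count, for each length $j$, the number of such paths in $G$: since $G$ is bipartite and $d$-regular, a walk of length $j$ starting at $u$ has at most $d^j$ choices, but we want it to end at $v$, which is one specific neighbor-type vertex; more carefully, a path of length $j$ from $u$ to $v$ is determined by its $j-1$ internal vertices, and there are at most $d^{j-1}$ such paths (fix the first $j-1$ edges freely — at most $d^{j-1}$ choices — and the last vertex must be $v$, so the last edge either exists or it does not). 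So the number of length-$j$ paths from $u$ to $v$ in $G$ is at most $d^{j-1}$.

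The second step is the probability that a given such path of length $j$ survives: since $(u,v)$ is already conditioned to be present and the path is edge-disjoint from $(u,v)$, by independence of the edge-sampling decisions the probability that all $j$ of its edges are in $G_\alpha$ is exactly $p_\alpha^{\,j} = \left(\frac{\alpha \log|\Sigma_A|}{d}\right)^{j}$. Combining via a union bound, the conditional probability that $(u,v)$ is in a cycle of length at most $k$ is at most
\begin{equation*}
\sum_{j=2}^{k-1} d^{\,j-1} \left(\frac{\alpha \log|\Sigma_A|}{d}\right)^{j}
= \sum_{j=2}^{k-1} \frac{(\alpha \log|\Sigma_A|)^{j}}{d}
\le \frac{1}{d}\sum_{j=2}^{k-1} (\alpha \log|\Sigma_A|)^{j}.
\end{equation*}
The final step is to bound this geometric-type sum: assuming $\alpha \log|\Sigma_A| \ge 2$ (which holds in our regime, and if it were smaller the bound is even easier since the whole supergraph is essentially a forest after sampling), the sum is dominated by its last term up to a factor of $2$, giving $\frac{1}{d}\cdot 2(\alpha\log|\Sigma_A|)^{k-1} = \frac{2(\alpha\log|\Sigma_A|)^{k-1}}{d}$, as claimed.

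The only mildly delicate point — and the place I would be most careful — is the path-counting in the first step: one must make sure not to overcount by a factor of $d$ (i.e.\ to remember that the endpoint is pinned to $v$, not free), and one should note that cycles of length less than $2$ or odd-length closed walks do not arise as simple cycles through $(u,v)$ in a bipartite graph, so the sum effectively starts at $j=2$; but since we are only proving an upper bound, including a few extra terms in the union bound is harmless. The conditioning on $(u,v)\in G_\alpha$ is clean because the edges of any $u$–$v$ path of length $\ge 2$ are distinct from $(u,v)$ and hence independent of that event. I do not expect any real obstacle here; this is a routine first-moment computation, and its only purpose is to feed into the later deterministic cleanup step that actually establishes large girth.
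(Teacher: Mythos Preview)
Your proposal is correct and follows essentially the same approach as the paper: a union bound over all potential short cycles through $(u,v)$, with the path count $d^{j-1}$ (equivalently $d^{k'-2}$ for a cycle of length $k'$), survival probability $p_\alpha^{j}$, and the geometric tail bounded under the assumption $\alpha\log|\Sigma_A|\ge 2$. The only cosmetic difference is that you index by the path length $j$ while the paper indexes by the cycle length $k'=j+1$ and explicitly starts the sum at $k'=4$; your extra $j=2$ term is zero in a bipartite graph anyway, as you note.
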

\begin{proof}
Let $4 \leq k' \leq k$ (note that no edge is in a cycle of length less than $4$ in any bipartite graph, including $G$).  Fix a cycle of length $k'$ in $G$ that contains $(u,v)$.  After conditioning on $(u,v)$ surviving the sampling, the probability that all of the other edges in the cycle are also in $G_{\alpha}$ is $p_{\alpha}^{k'-1} = \left(\frac{\alpha \log |\Sigma_A|}{d}\right)^{k'-1}$.  On the other hand, we know from the degree bound in $G$ that the number of $k'$-cycles containing $(u,v)$ is at most $d^{k'-2}$.  So a union bound implies that the probability that $(u,v)$ is in a $k'$-cycle in $G_{\alpha}$ is at most $\frac{(\alpha \log |\Sigma_A|)^{k'-1}}{d}$.  Now we take a union bound over all $4 \leq k' \leq k$ to get that the total probability that $(u,v)$ is in a cycle of length at most $k$ is at most $\sum_{k'=4}^k \frac{(\alpha \log |\Sigma_A|)^{k'-1}}{d} \leq \frac{2(\alpha \log |\Sigma_A|)^{k-1}}{d}$ as claimed (assuming without loss of generality that $\alpha \log |\Sigma_A| \geq 2$).
\end{proof}

It is easy to see that subsampling preserves approximate regularity, but we will now prove so formally.

\begin{lemma} \label{lem:sampling_regularity}
If $\alpha \geq 16 \log n$ then with probability at least $1-2/n$ the degree of every vertex in $G_{\alpha}$ is between $\frac12 \alpha \log |\Sigma_A|$ and $2 \alpha \log |\Sigma_A|$.
\end{lemma}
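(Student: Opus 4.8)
The plan is a routine application of the multiplicative Chernoff bound followed by a union bound over vertices. Fix any vertex $w$ of $G$, and recall that $w$ has exactly $d$ incident edges in $G$, each of which is retained in $G_\alpha$ independently with probability $p_\alpha = \frac{\alpha \log |\Sigma_A|}{d}$. Thus the degree $D_w$ of $w$ in $G_\alpha$ is a sum of $d$ independent $0/1$ random variables, and $\E[D_w] = d \, p_\alpha = \alpha \log |\Sigma_A|$; write $\mu = \alpha \log |\Sigma_A|$ for this expectation. Since $|\Sigma_A| \geq 2$ we have $\mu \geq \alpha \geq 16 \log n$.

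Next I would apply the two-sided Chernoff bound (e.g.~\cite[Theorem 1.1]{DP09}) to $D_w$: the upper-tail form gives $\Pr[D_w \geq 2\mu] \leq e^{-\mu/3}$, and the lower-tail form gives $\Pr[D_w \leq \mu/2] \leq e^{-\mu/8}$. Since $e > 2$ and $\mu \geq 16 \log n$, the lower tail (the binding one) satisfies $e^{-\mu/8} \leq e^{-2\log n} \leq 2^{-2\log n} = n^{-2}$, with the constant $16$ chosen precisely so that this works out; the upper tail is even smaller. Hence for a single fixed vertex $w$ we get $\Pr\big[D_w \notin [\tfrac12 \mu,\, 2\mu]\big] \leq 2 n^{-2}$.

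Finally I would take a union bound over all $|A| + |B| = n$ vertices of $G_\alpha$, which yields that with probability at least $1 - n \cdot 2n^{-2} = 1 - 2/n$ every vertex has degree between $\frac12 \alpha \log |\Sigma_A|$ and $2 \alpha \log |\Sigma_A|$, as claimed. There is no genuine obstacle here. The only point meriting a word of care is that the edge-retention events for two distinct vertices may be correlated (when the two vertices share an edge); but this is irrelevant, since the union bound over vertices does not require independence across vertices, and for each \emph{fixed} vertex the Chernoff bound is applied to the $d$ genuinely independent indicator variables of its incident edges.
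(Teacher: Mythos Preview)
Your proposal is correct and follows essentially the same approach as the paper: compute the expected degree $\mu = \alpha \log|\Sigma_A|$, apply the multiplicative Chernoff bound to get a per-vertex failure probability of at most $2e^{-\mu/8} \leq 2/n^2$, and then take a union bound over the $n$ vertices. Your write-up is in fact a bit more careful than the paper's (you separate the upper and lower tails, handle the $e$ versus $2$ base issue explicitly, and note that inter-vertex correlations are immaterial for the union bound), but the argument is the same.
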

\begin{proof}
Since $G$ is regular with degree $d$ and $p_{\alpha} = \frac{\alpha \log |\Sigma_A|}{d}$, the expected degree of a vertex $v$ in $G_{\alpha}$ is clearly $\alpha \log |\Sigma_A|$.  So by a Chernoff bound (see e.g.~\cite[Theorem 1.1]{DP09}), the probability that the degree is less than $1/2$ of this or more than twice this is at most $2 \cdot e^{-\alpha \log |\Sigma_A| / 8} \leq 2 / |\Sigma_A|^{-\alpha / 8}$.  Since $\alpha \geq 16 \log n$ and $|\Sigma_A \geq 2$, this probability is at most $2/n^2$, so taking a union bound over vertices implies that the probability that all of them have degree in the desired range is at least $1 - 2/n$.
\end{proof}

\section{Label Cover and Min-Rep with large (super)girth}

In this section we show that Label Cover and Min-Rep are both hard to approximate, even when restricted to instances with large (super)girth.  We start with a PCP verifier with supergraph $G$ and Min-Rep graph $H$, and then use the previously described random sampling technique to get a new supergraph $G_{\alpha}$ and Min-Rep graph $H_{\alpha}$.  We now remove from $G_{\alpha}$ any edge that is in a cycle of length at most $k$, giving us a new supergraph $G'_{\alpha}$ and Min-Rep graph $H'_{\alpha}$ (where an edge $((a, \delta), (b, \beta))$ from $H$ is in $H'_{\alpha}$ if $(a,b)$ remains as an edge in $G'_{\alpha}$).  These instances will form our reduction.

We say that an edge $(i,j) \in G_{\alpha}$ is \emph{bad} if it is not in $G'_{\alpha}$, i.e.~it is part of a cycle of length of at most $k$ in $G_{\alpha}$.

\begin{lemma} \label{lem:large_girth}
Let $16 \log n \leq \alpha \leq \min\{1/s, d^{1/k} / \log |\Sigma_A|\}$.  Then with probability larger than $2/3$ the number of bad edges is at most $O\left(\frac{n(\alpha \log |\Sigma_A|)^k}{d}\right) \leq O(n)$
\end{lemma}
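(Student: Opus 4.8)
The plan is a first-moment argument followed by Markov's inequality. Let $X$ denote the number of bad edges, i.e.\ the number of edges of $G_\alpha$ that lie on a cycle of length at most $k$ in $G_\alpha$, and write $X = \sum_{(u,v)\in E} \mathbf{1}[(u,v)\text{ is bad}]$. An edge $(u,v)$ can be bad only if it survives the sub-sampling, so
\[
\Pr[(u,v)\text{ is bad}] \;=\; \Pr[(u,v)\in G_\alpha]\cdot \Pr\big[(u,v)\text{ lies on a }{\le}k\text{-cycle of }G_\alpha \;\big|\; (u,v)\in G_\alpha\big].
\]
The first factor equals $p_\alpha = \frac{\alpha\log|\Sigma_A|}{d}$ by definition of the sampling, and the second factor is at most $\frac{2(\alpha\log|\Sigma_A|)^{k-1}}{d}$ by Lemma~\ref{lem:edge_cycle}.

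Since $G$ is $d$-regular on $n$ vertices we have $|E| = nd/2$, so by linearity of expectation
\[
\E[X] \;\le\; \frac{nd}{2}\cdot\frac{\alpha\log|\Sigma_A|}{d}\cdot\frac{2(\alpha\log|\Sigma_A|)^{k-1}}{d} \;=\; \frac{n(\alpha\log|\Sigma_A|)^k}{d}.
\]
By Markov's inequality, $\Pr\!\left[X \ge 4\cdot\frac{n(\alpha\log|\Sigma_A|)^k}{d}\right] \le 1/4$, so with probability larger than $2/3$ we have $X = O\!\left(\frac{n(\alpha\log|\Sigma_A|)^k}{d}\right)$, which is the first claimed bound. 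For the second bound I would just invoke the hypothesis $\alpha \le d^{1/k}/\log|\Sigma_A|$, i.e.\ $\alpha\log|\Sigma_A| \le d^{1/k}$, which gives $(\alpha\log|\Sigma_A|)^k \le d$ and hence $\frac{n(\alpha\log|\Sigma_A|)^k}{d} \le n$; thus $X = O(n)$ with probability larger than $2/3$.

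I do not expect a genuine obstacle: the lemma is essentially a one-line expectation computation combined with Markov. The only points requiring minor care are (i) that the probability supplied by Lemma~\ref{lem:edge_cycle} is \emph{conditional} on the edge surviving, so it must be multiplied by the survival probability $p_\alpha$ rather than used on its own, and (ii) that because we only want a statement holding with probability $2/3$ (rather than with high probability), plain Markov suffices and no Chernoff-type concentration is needed --- indeed the bad events for different edges are far from independent, so a concentration bound would be considerably more delicate. Note also that the hypothesis $\alpha \ge 16\log n$ plays no role in this particular proof; it is carried along only because the same $\alpha$ must simultaneously satisfy the regularity guarantee of Lemma~\ref{lem:sampling_regularity} used later in the reduction.
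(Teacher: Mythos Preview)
Your proof is correct and follows essentially the same approach as the paper: compute the expected number of bad edges via Lemma~\ref{lem:edge_cycle} and apply Markov's inequality, then use the upper bound on $\alpha$ to conclude $O(n)$. Your presentation is in fact slightly cleaner than the paper's, which phrases Markov in terms of the \emph{fraction} of bad edges and then separately invokes a high-probability bound on $|E(G_\alpha)|$; by summing directly over $E(G)$ you avoid that extra step, which is why---as you correctly observe---your argument never uses the hypothesis $\alpha \ge 16\log n$.
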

\begin{proof}
Lemma~\ref{lem:edge_cycle} and Markov's inequality imply that with probability at least $3/4$, at most a $\frac{8(\alpha \log |\Sigma_A|)^{k-1}}{d}$ fraction of the edges are bad.  With high probability the total number of edges in $G_{\alpha}$ is $\Theta(n \alpha \log |\Sigma_A|)$, so this means that the number of bad edges is at most $O\left(\frac{n (\alpha \log |\Sigma_A|)^k}{d}\right)$.  By our choice of $\alpha$, this is at most $O(n)$.  
\end{proof}


\begin{theorem} \label{thm:LC1}
If there is no (randomized) polynomial time algorithm that can distinguish between instances of Label Cover in which $|A| = |B| = n/2$ and all vertices have degree $d$ where all edges are satisfiable and instances where at most an $s$ fraction of the edges are satisfiable, then there is some constant $c$ so that for $18 \log n \leq \alpha \leq \min\{1/s, d^{1/k} / \log |\Sigma_A|\}$ there is no (randomized) polynomial time algorithm that can distinguish between instances of Label Cover with girth larger than $k$ in which all edges are satisfiable and instances in which at most a $c/\alpha$ fraction of the edges are satisfiable.
\end{theorem}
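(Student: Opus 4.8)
The plan is to bundle together the three random-sampling lemmas from Section~2 and then clean up the resulting instance deterministically. Starting from the hard Label Cover instance $G = (A,B,E)$ that is regular of degree $d$ with completeness $1$ and soundness $s$, form $G_\alpha$ by subsampling each edge independently with probability $p_\alpha = (\alpha \log|\Sigma_A|)/d$, where $\alpha$ is in the stated range $18\log n \le \alpha \le \min\{1/s,\ d^{1/k}/\log|\Sigma_A|\}$. The three probabilistic facts I want to hold \emph{simultaneously} are: (i) by Lemma~\ref{lem:sample_soundness1}, $G_\alpha$ has completeness $1$ and soundness at most $4e/\alpha$; (ii) by Lemma~\ref{lem:sampling_regularity} (applicable since $\alpha \ge 18\log n \ge 16\log n$), every vertex of $G_\alpha$ has degree between $\tfrac12\alpha\log|\Sigma_A|$ and $2\alpha\log|\Sigma_A|$; and (iii) by Lemma~\ref{lem:large_girth}, the number of bad edges (those lying on a cycle of length $\le k$ in $G_\alpha$) is $O(n(\alpha\log|\Sigma_A|)^k/d) = O(n)$. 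Each of these fails with probability at most a small constant (or $o(1)$), so by a union bound all three hold together with probability, say, at least $1/2$; hence this randomized reduction succeeds with constant probability, which is all we need against a randomized distinguisher.

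Next I would pass from $G_\alpha$ to $G'_\alpha$ by deleting every bad edge. By (iii) we remove only $O(n)$ edges, while by (i)/(ii) the instance $G_\alpha$ has $\Theta(n\alpha\log|\Sigma_A|)$ edges total. The key point is that deleting $O(n)$ edges out of $\Theta(n\alpha\log|\Sigma_A|)$ edges changes the satisfiable fraction by only an additive $O(1/(\alpha\log|\Sigma_A|)) = O(1/\alpha)$: in the completeness case, a proof satisfying all edges of $G_\alpha$ still satisfies all \emph{remaining} edges of $G'_\alpha$, so completeness is still $1$; in the soundness case, no proof satisfies more than a $4e/\alpha$ fraction of the edges of $G_\alpha$, so no proof can satisfy more than $(4e/\alpha)\cdot|E(G_\alpha)|$ of the edges of $G'_\alpha$, which is a $(4e/\alpha)\cdot|E(G_\alpha)|/|E(G'_\alpha)| \le (4e/\alpha)/(1 - o(1)) = O(1/\alpha)$ fraction of the edges of $G'_\alpha$. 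So $G'_\alpha$ has completeness $1$ and soundness $c/\alpha$ for an absolute constant $c$, and by construction $G'_\alpha$ contains no cycle of length $\le k$, i.e.\ it has girth larger than $k$. The size of $G'_\alpha$ is polynomial in the size of $G$, so a polynomial-time distinguisher for large-girth Label Cover would yield a polynomial-time distinguisher for the original instance, contradicting the hypothesis.

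One detail I should be careful about: the statement of Lemma~\ref{lem:sample_soundness1} requires $\alpha \le 1/s$, Lemma~\ref{lem:edge_cycle} (via Lemma~\ref{lem:large_girth}) requires $\alpha \le d^{1/k}/\log|\Sigma_A|$ to keep the bad-edge count down to $O(n)$, and Lemma~\ref{lem:sampling_regularity} needs $\alpha \ge 16\log n$ — the hypothesis $18\log n \le \alpha \le \min\{1/s,\ d^{1/k}/\log|\Sigma_A|\}$ is exactly the common window in which all three apply, and the slightly larger lower bound $18\log n$ rather than $16\log n$ just gives a little slack for the Chernoff/Markov constants. The main (minor) obstacle is bookkeeping the constants so that the final soundness is genuinely of the form $c/\alpha$ with $c$ independent of everything: the $4e$ from the soundness lemma, the factor-$2$ slack from approximate regularity, and the $O(1)$ loss from deleting bad edges all need to be absorbed into a single constant, but since every one of these is an absolute constant this goes through cleanly. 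The more conceptual point — that subsampling down to degree $\approx\alpha\log|\Sigma_A|$ does not hurt the gap — is already encapsulated in Lemma~\ref{lem:sample_soundness1}, so the theorem is essentially an assembly of the pieces already in hand.
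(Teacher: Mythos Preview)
Your proposal is correct and follows essentially the same approach as the paper: subsample to get $G_\alpha$, invoke the soundness and bad-edge lemmas, delete all bad edges to obtain $G'_\alpha$, and observe that the $O(n)$ deleted edges are negligible compared to the $\Theta(n\alpha\log|\Sigma_A|)$ total, so completeness stays $1$ and soundness becomes $c/\alpha$. The only (harmless) difference is that you explicitly invoke the approximate-regularity lemma to count edges, whereas the paper pulls the edge count directly from the Chernoff bound already inside Lemma~\ref{lem:sample_soundness1}; otherwise the arguments coincide.
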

\begin{proof}
If there is a labeling that satisfies all edges of $G$, then clearly the same labeling satisfies all edges of $G'_{\alpha}$.  On the other hand, suppose that only an $s$ fraction of the edges of $G$ can be satisfied.  By Lemma~\ref{lem:large_girth}, the number of bad edges is at most $O(n)$, so the total number of edges in $G'_{\alpha}$ is still $\Theta(n \alpha \log |\Sigma_A|)$.

Fix any labeling of $G'_{\alpha}$, and suppose that it satisfies a $\beta$ fraction of the edges of $G'_{\alpha}$.  Then even if it would have satisfied all of the bad edges, the number of edges of $G_{\alpha}$ that it satisfies is at most $\beta  \cdot \Theta(n \alpha \log |\Sigma_A|) + O(n)$.  By Lemma~\ref{lem:sample_soundness1} this must be at most $(4e/\alpha) \cdot \Theta(n \alpha \log |\Sigma_A|)$, and thus for some constant $c$ we have that $\beta \leq c/\alpha$.

Thus if we could distinguish between the case when every edge of $G'_{\alpha}$ can be satisfied and the case when at most a $c/\alpha$ fraction can be satisfied, we could distinguish between the case when every edge of $G$ can be satisfied and the case when at most an $s$ fraction can be satisfied.
\end{proof}

We now reduce to Min-Rep.  We first show how the size of the minimum REP-cover of $H_{\alpha}$ depends on $G$.  We will then show that, similar to Label Cover, we can deterministically remove small cycles to get the instance $H'_{\alpha}$ with large supergirth that preserves this gap.


\begin{lemma} \label{lem:MR_reduction}
Let $18 \log n \leq \alpha \leq 1/s$.  If there is a valid labeling of $G$ then the Min-Rep instance $H_{\alpha}$ has a REP-cover of size $n$ (where $n$ is the number of vertices in the supergraph).  Otherwise, with high probability the smallest REP-cover has size at least $\Omega(n\sqrt{\alpha})$.
\end{lemma}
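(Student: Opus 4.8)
The plan is to analyze the two cases separately, exactly paralleling the structure of the Label Cover argument but tracking set sizes instead of edge counts. For the completeness case, suppose $\gamma_A, \gamma_B$ is a valid labeling of $G$, so every superedge $(i,j)$ of $G$ is satisfied by $(\gamma_A(i), \gamma_B(j))$. Since $G_\alpha$ is a subgraph of $G$ on the same supervertex set, the same labeling satisfies every superedge of $G_\alpha$ as well. In the Min-Rep graph $H_\alpha$ this corresponds to selecting, for each supervertex $i$, the single vertex $(i, \gamma_A(i))$ (or $(i, \gamma_B(i))$) of the corresponding block. This set $S$ has size exactly $|A| + |B| = n$, and for every $(i,j) \in G_\alpha$ the vertices $(i,\gamma_A(i))$ and $(j,\gamma_B(j))$ are joined by an edge in $H_\alpha$ since the edge is satisfied; hence $S$ is a REP-cover of size $n$.

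For the soundness case, I would argue by contradiction: suppose $S$ is a REP-cover of $H_\alpha$ with $|S| < c\, n\sqrt{\alpha}$ for a suitable small constant $c$, and use it to build a randomized labeling of $G_\alpha$ that satisfies too large a fraction of edges, contradicting Lemma~\ref{lem:sample_soundness1}. The natural construction: for each supervertex $i$, let $S_i = S \cap (\{i\} \times \Sigma)$ be the representatives chosen in block $i$, and define a random labeling by picking $\gamma(i)$ uniformly at random from $S_i$ (if $S_i = \emptyset$, pick arbitrarily — such vertices contribute nothing). For a superedge $(i,j)$ that $S$ covers, there exist $(i,\alpha')\in S_i$ and $(j,\beta')\in S_j$ with $(i,\alpha'),(j,\beta')$ adjacent in $H_\alpha$; the random labeling satisfies $(i,j)$ with probability at least $\frac{1}{|S_i|\,|S_j|}$. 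Using Lemma~\ref{lem:sampling_regularity} (which applies since $\alpha \geq 18\log n \geq 16\log n$), with high probability every supervertex in $G_\alpha$ has degree between $\frac12\alpha\log|\Sigma_A|$ and $2\alpha\log|\Sigma_A|$, so $|E(G_\alpha)| = \Theta(n\alpha\log|\Sigma_A|)$. The expected number of satisfied superedges is at least $\sum_{(i,j) \text{ covered}} \frac{1}{|S_i||S_j|}$, and since $S$ covers \emph{all} superedges of $G_\alpha$, this sum is over all of $E(G_\alpha)$. Now I bound this sum from below using $\sum_i |S_i| = |S|$ together with the degree bounds: by convexity / a counting argument, having few total representatives but needing to cover $\Theta(n\alpha\log|\Sigma_A|)$ edges forces $\sum \frac{1}{|S_i||S_j|}$ to be reasonably large; quantitatively, if $|S| \le c n\sqrt\alpha$ then the expected number of satisfied edges is at least $\Omega(n\sqrt\alpha\log|\Sigma_A|) = \omega\big((1/\alpha)\cdot n\alpha\log|\Sigma_A|\big)$ for small $c$, contradicting the soundness bound $4e/\alpha$ from Lemma~\ref{lem:sample_soundness1}.

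The main obstacle I anticipate is the lower-bound estimate on $\sum_{(i,j)\in E(G_\alpha)} \frac{1}{|S_i|\,|S_j|}$ in terms of $|S|$ and the (approximately regular) degree $D := \Theta(\alpha\log|\Sigma_A|)$. The cleanest route is probably: restrict attention to supervertices with $|S_i|$ not too large — say $|S_i| \le T$ for a threshold $T$ to be chosen around $\sqrt\alpha$ — and observe that the number of supervertices with $|S_i| > T$ is at most $|S|/T$, so the edges incident to them number at most $(|S|/T)\cdot D$, which is a small fraction of all $\Theta(nD)$ edges once $|S|/T = o(n)$. On the remaining $\Omega(nD)$ edges we have $\frac{1}{|S_i||S_j|} \ge \frac{1}{T^2}$, giving expected satisfied edges $\ge \Omega(nD/T^2)$. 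Setting $T \asymp \sqrt\alpha$ yields $\Omega(nD/\alpha) = \Omega(n\log|\Sigma_A|)$ satisfied edges in expectation; comparing with the soundness upper bound $(4e/\alpha)\cdot\Theta(nD) = \Theta(n\log|\Sigma_A|)$ shows the two are the same order, so to get a genuine contradiction I need to push $T$ slightly below $\sqrt\alpha$ (or push the constant), at which point the counting of "heavy" blocks must still leave a constant fraction of edges — this is where the constant $c$ in $\Omega(n\sqrt\alpha)$ gets pinned down, and care is needed so that the "$|S|/T = o(n)$" condition is actually implied by $|S| \le cn\sqrt\alpha$ with $T$ slightly sublinear in $\sqrt\alpha$. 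Once the arithmetic is balanced, a standard averaging (existence of one labeling meeting the expectation) converts the probabilistic labeling into a deterministic one, completing the contradiction with Lemma~\ref{lem:sample_soundness1}.
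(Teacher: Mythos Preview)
Your proposal is correct and follows essentially the same route as the paper: random labeling drawn uniformly from each block's representatives, a threshold on block sizes, a Markov count of heavy blocks, approximate regularity (Lemma~\ref{lem:sampling_regularity}) to control the edges lost to heavy blocks, and then a $1/T^2$ lower bound on the satisfaction probability for light--light edges compared against the $4e/\alpha$ soundness from Lemma~\ref{lem:sample_soundness1}.

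One small clarification on your worry about the constants: the cleanest way to balance the arithmetic is to tie $T$ to the average block size rather than to $\sqrt{\alpha}$ directly. The paper writes $|S| = \beta n$ and sets the threshold at $18\beta$; then at most a $1/9$ fraction of each side is heavy (else the total exceeds $\beta n$), regularity keeps at least $1/3$ of the edges in the light--light part, and the expected satisfied fraction is $\ge 1/(972\beta^2)$. Requiring this to exceed $4e/\alpha$ forces $\beta \ge \sqrt{\alpha}/(36\sqrt{3e})$, which is exactly the $\Omega(n\sqrt{\alpha})$ bound. Your version with $T$ chosen near $\sqrt{\alpha}$ arrives at the same place, but coupling $T$ to $\beta$ avoids the ``same order'' tension you flagged.
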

\begin{proof}
If there is a valid labeling of $G$ then by Lemma~\ref{lem:sample_soundness1} there is a valid labeling of $G_{\alpha}$ (since the completeness remains $1$), and thus there is a REP-cover of $H_{\alpha}$ of size $n$ as claimed.  On the other hand, suppose that at most an $s$ fraction of the edges of $G$ can be satisfied.  Then since the soundness of $G_{\alpha}$ is at most $4e/\alpha$ by Lemma~\ref{lem:sample_soundness1}, any labeling of $G_{\alpha}$ satisfies at most a $4e / \alpha$ fraction of the edges.  Suppose that there is a REP-cover of $H_{\alpha}$ of size less than $n \sqrt{\alpha} / (36\sqrt{3e})$.  We will show that this implies that there is a labeling of $G_{\alpha}$ that satisfies more than a $4e / \alpha$ fraction of the edges, giving a contradiction and proving the lemma.

Suppose that the smallest REP-cover for $H_{\alpha}$ has size $\beta n$.   This means that the \emph{average} number of representatives/labels assigned to each vertex in $G_{\alpha}$ by this cover is $\beta$.  To analyze the labeling that covers the most edges, we analyze the random labeling obtained by choosing for each vertex a label uniformly at random from the set of labels it is assigned by the REP-cover.  Let $A' \subseteq A$ be the set of vertices in $A$ that receive at most $18 \beta$ labels in this REP-cover, and define $B' \subseteq B$ analogously.  Note that $|A'| \geq (8/9) |A|$ and similarly $|B'| \geq (8/9) |B|$, since otherwise the total number of representatives in the REP-cover is larger than $(1/9) \cdot (n/2) \cdot (18 \beta) = \beta n$, contradicting our assumption on the size of the REP-cover.  With high probability the fraction of edges of $G_{\alpha}$ that touch a vertex of $A \setminus A'$ is at most $\frac{2d (1/9)}{(2d (1/9) + (8/9) (d/2)} = 1/3$, and similarly for the fraction of edges that touch $B \setminus B'$ (where we used the approximate regularity from Lemma~\ref{lem:sampling_regularity}).  So if we consider the subgraph of $G_{\alpha}$ induced by $A' \cup B'$ we still have at least $1/3$ of the edges of $G_{\alpha}$.

Now let $(a,b) \in A' \times B'$ be such an edge.  Since we started with a REP-cover, there is at least one representative assigned to $a$ and one representatives assigned to $b$ that satisfy the relation $\pi_{(a,b)}$.  Since both endpoints have at most $18\beta$ representatives in the REP-cover, the probability that these two representatives are the assigned labels is at least $1/(18\beta)^2$.  Thus by linearity of expectations we expect that at least $1/(3 (18\beta)^2) = 1/(972 \beta^2)$ fraction of the edges are covered by our random labeling, so there exists some labeling that covers at least that many.  If $\beta \leq \frac{\sqrt{\alpha}}{36 \sqrt{3e}}$ then this is at least $4e / \alpha$, giving a contradiction.  Thus the smallest REP-cover has size at least  $(n \sqrt{\alpha}) / (36\sqrt{3e})$, proving the lemma.
\end{proof}

We will now get rid of small cycles by using the instance $H'_{\alpha}$, proving hardness of Min-Rep with large supergirth.

\begin{theorem} \label{thm:main1}
If there is no (randomized) polynomial time algorithm that can distinguish between instances of Label Cover in which $|A| = |B| = n/2$ and all vertices have degree $d$ where all edges are satisfiable and instances where at most an $s$ fraction of the edges are satisfiable, then there is some constant $c$ so that for $18 \log n \leq \alpha \leq \min\{1/s, d^{1/k} / \log |\Sigma_A|\}$ there is no (randomized) polynomial time algorithm that can distinguish between instances of Min-Rep with supergirth larger than $k$ where the smallest REP-cover has size at most $n$ and instances where the smallest REP-cover has size at least $n \sqrt{\alpha} / c$ (here $n$ is the size of the supergraph).
\end{theorem}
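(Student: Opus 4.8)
The plan is to bolt together Lemma~\ref{lem:MR_reduction}, which already establishes the REP-cover gap for the sampled instance $H_{\alpha}$ \emph{before} cleaning, with Lemma~\ref{lem:large_girth}, which says the cleaning step only deletes $O(n)$ bad superedges. The completeness side is immediate and deterministic: a labeling satisfying all edges of $G$ satisfies all edges of $G_{\alpha}$ (subsampling preserves completeness $1$), so assigning one representative per supervertex is a REP-cover of $H_{\alpha}$ of size $n$; since $G'_{\alpha}$ is a subgraph of $G_{\alpha}$, the same set covers all superedges of $G'_{\alpha}$, and $H'_{\alpha}$ has $n$ supervertices, so its minimum REP-cover has size at most $n$.

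For the soundness side, suppose at most an $s$ fraction of the edges of $G$ are satisfiable, and condition on the high-probability events guaranteed for $18\log n \le \alpha \le \min\{1/s,\ d^{1/k}/\log|\Sigma_A|\}$: by Lemma~\ref{lem:large_girth} there are $O(n)$ bad edges (probability $>2/3$), and by Lemma~\ref{lem:MR_reduction} the minimum REP-cover of $H_{\alpha}$ has size $\Omega(n\sqrt{\alpha})$ (with high probability). Now take any REP-cover $S$ of $H'_{\alpha}$ of size $\beta n$ and \emph{repair} it to a REP-cover of $H_{\alpha}$: for each of the $O(n)$ bad superedges $(a,b)$ the relation $\pi_{(a,b)}$ is nonempty, so pick $(\gamma,\delta)\in\pi_{(a,b)}$ and add the representatives $(a,\gamma)$ and $(b,\delta)$. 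This yields a REP-cover of $H_{\alpha}$ of size $\beta n + O(n)$, hence $\beta n + O(n) = \Omega(n\sqrt{\alpha})$. Since $\alpha \ge 18\log n$ is $\omega(1)$, the additive $O(n)$ is negligible next to $\Omega(n\sqrt{\alpha})$, so $\beta = \Omega(\sqrt{\alpha})$; choosing the constant $c$ appropriately (using that $\alpha$ is large) gives $\beta n \ge n\sqrt{\alpha}/c$, i.e.\ the minimum REP-cover of $H'_{\alpha}$ is at least $n\sqrt{\alpha}/c$.

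It remains to note that $G'_{\alpha}$ is obtained from $G_{\alpha}$ by deleting every edge on a cycle of length $\le k$, so any cycle surviving in $G'_{\alpha}$ has length $>k$ and therefore $H'_{\alpha}$ has supergirth $>k$; and that the map $G \mapsto H'_{\alpha}$ runs in randomized polynomial time (the Min-Rep graph has $|A||\Sigma_A|+|B||\Sigma_B|$ vertices, polynomial since $|\Sigma|$ is polynomial, and sampling edges followed by deterministically stripping short-cycle edges is poly-time). A union bound over the events above shows the reduction succeeds with probability bounded away from $0$ (say $\ge 2/3 - o(1)$); on success it sends satisfiable Label Cover instances to supergirth-$>k$ Min-Rep instances with minimum REP-cover $\le n$, and $s$-unsatisfiable instances to supergirth-$>k$ instances with minimum REP-cover $\ge n\sqrt{\alpha}/c$. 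Hence a polynomial-time distinguisher for these two Min-Rep cases would, after standard amplification of the reduction's success probability, distinguish the two Label Cover cases, contradicting the hypothesis.

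The one genuinely delicate point is the soundness transfer from $H_{\alpha}$ to $H'_{\alpha}$: a REP-cover of $H'_{\alpha}$ need not be a REP-cover of $H_{\alpha}$, so Lemma~\ref{lem:MR_reduction} does not apply to it directly. The fix is the repair step, and it works only because Lemma~\ref{lem:large_girth} bounds the bad edges by $O(n)$ (this is where the constraint $\alpha \le d^{1/k}/\log|\Sigma_A|$ gets used) and because $\sqrt{\alpha} = \omega(1)$ swamps that additive cost. Everything else is bookkeeping: verifying that the ``with high probability'' events of Lemmas~\ref{lem:sample_soundness1}, \ref{lem:sampling_regularity}, \ref{lem:large_girth}, and \ref{lem:MR_reduction} can be made to hold simultaneously with constant probability, and that a constant-success-probability randomized reduction suffices against $BPTIME$-type hardness.
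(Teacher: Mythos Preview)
Your proof is correct and follows essentially the same approach as the paper: the completeness argument is identical, and your ``repair'' step (adding two representatives per bad superedge to extend a REP-cover of $H'_{\alpha}$ to one of $H_{\alpha}$) is precisely the contrapositive of the paper's observation that removing a superedge can decrease the optimal REP-cover size by at most $2$, so that $\Omega(n\sqrt{\alpha}) - O(n) = \Omega(n\sqrt{\alpha})$. Your treatment is in fact somewhat more careful about the probabilistic bookkeeping (simultaneity of the high-probability events and amplification) than the paper's.
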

\begin{proof}
If there is a labeling that satisfies all edges of $G$, then clearly choosing that labeling gives a valid REP-cover of $H'_{\alpha}$ of size at most $n$.  For the other case, suppose that any labeling of $G$ satisfies at most an $s$ fraction of the edges.  Then by Lemma~\ref{lem:MR_reduction}, with high probability the smallest REP-cover of $H_{\alpha}$ has size at least $\Omega(n\sqrt{\alpha})$.
By Lemma~\ref{lem:large_girth}, the number of bad edges is at most $O(n)$.  Removing any particular edge (in particular a bad edge) can only decrease the size of the optimal REP-cover by at most $2$, so if we remove all bad edges (getting $H'_{\alpha}$) we are left with an instance with supergirth larger than $k$ in which the smallest REP-cover has size at least $\Omega(n \sqrt{\alpha}) - O(n) = \Omega(n \sqrt{\alpha})$.  By construction the supergirth is greater than $k$, so this proves the theorem.
\end{proof}

Now we define and analyze the PCP / Label Cover instances to which we will apply Theorems~\ref{thm:LC1} and~\ref{thm:main1}.  Recall that Max-3SAT(5) is the problem of finding an assignment to variables of a 3-CNF formula that maximizes the number of satisfied clauses, with the additional property that every variable appears in exactly 5 clauses of the formula.   We begin with the standard Label Cover instance for Max-3SAT(5) (see for example~\cite{AL96}).  The graph $(A,B,E)$ has $|B| = n'$ and $|A| = 5n'/3$ (where $n'$ is the number of variables in the instance), and every vertex in $A$ has degree $3$ and every vertex in $B$ has degree $5$.  Vertices in $A$ correspond to clauses and vertices in $B$ correspond to variables.  The alphabet sizes are $|\Sigma_A| = 7$ and $|\Sigma_B| = 2$.  The PCP Theorem~\cite{SSAA,AS98} implies that the gap for these instances is constant, i.e.~it is hard to distinguish the case when all edges are satisfiable from the case in which $1/(1+\epsilon)$ fraction of the edges are satisfiable, for some constant $\epsilon$.

   Now we take three copies of $A$, call them $A_1, A_2, A_3$, and let $A'$ be their union (so $|A'| = 5n'$).  Similarly we take five copies of $B$ to get $B_i$ for $i \in [5]$, and take their union to be $B'$.  Now between each $A_i$ and each $B_j$ we put a copy of the original edge set $E$ (which we will call $E_{ij}$), giving us edge set $E'$.  Note that $|B'| = |A'| = 5n'$ and every vertex has degree $15$.  Obviously if the original instance has all edges satisfiable then that is still true of this instance.  On the other hand, suppose in the original instance at most $1/(1+\epsilon)$  of the edges are satisfiable.
Then fix any labeling of $A'$ and $B'$.  This induces some labeling of $A_i$ and $B_j$, which we know can only satisfy $1/(1+\epsilon)$ of the edges in $E_{ij}$.  Since this is true for all $i,j$, the total fraction of satisfied edges is at most $1/(1+\epsilon)$.

We now apply parallel repetition $\ell$ times.  Now each side has size $(5n')^{\ell}$, the degree is $d = 15^{\ell}$, and the alphabet sizes are $|\Sigma_A| = 7^{\ell}$ and $|\Sigma_B| = 2^{\ell}$.  By the parallel repetition theorem~\cite{R}, unless $NP \subseteq BPTIME(n^{O(\ell)})$ it is hard to distinguish between the case when all edges are satisfiable and when at most a $2^{-\ell / c}$ fraction are satisfiable for some constant $c$.  We can apply Theorem~\ref{thm:LC1} to this construction to get the following hardness result.

\begin{theorem} \label{thm:LC_main}
Assuming $NP \not\subseteq BPTIME(2^{polylog(n)})$, for any constant $\epsilon > 0$ and $3 \leq k \leq \log^{1-2\epsilon} n$ there is no polynomial time algorithm that can approximate Label Cover with girth greater than $k$ to a factor better than $2^{(\log^{1-\epsilon} n) / k}$.
\end{theorem}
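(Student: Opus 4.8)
The plan is to apply Theorem~\ref{thm:LC1} to the regular, $\ell$-fold parallel-repeated Max-3SAT(5) Label Cover family constructed just above, for a carefully chosen number of repetitions $\ell$ and sub-sampling parameter $\alpha$. Recall that this family has supergraph size $n := 2(5n')^{\ell}$ (where $n'$ is polynomial in the size of the original SAT instance), is regular of degree $d = 15^{\ell}$, has alphabets $|\Sigma_A| = 7^{\ell} \geq |\Sigma_B| = 2^{\ell}$, and --- unless $NP \subseteq BPTIME(2^{polylog})$ --- it is hard to distinguish the case in which all superedges are satisfiable from the case in which only an $s := 2^{-\ell/c_0}$ fraction are, where $c_0$ is the parallel-repetition constant. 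In particular $\log n = \Theta(\ell \log n')$, and parallel repetition preserves both the regularity and the inequality $|\Sigma_A| \geq |\Sigma_B|$ that Theorem~\ref{thm:LC1} needs.

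I would set $\ell := \lceil (\log n')^{1/\epsilon} \rceil$. Then $\log n = \Theta\big((\log n')^{1+1/\epsilon}\big)$ is polylogarithmic in the SAT-instance size, so the entire reduction --- forming the repeated instance, sub-sampling it, and deterministically deleting edges on short cycles exactly as in the proof of Theorem~\ref{thm:LC1} --- runs in time $2^{polylog}$; hence a polynomial-time (randomized) approximation algorithm for the output instance would refute $NP \not\subseteq BPTIME(2^{polylog})$. Inverting the relation between $n$ and $n'$ gives $\log n' = \Theta\big((\log n)^{1/(1+\epsilon)}\big)$ and therefore $\ell = \Theta\big((\log n)^{1/(1+\epsilon)}\big)$. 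I would then take $\alpha := \min\{\, 1/s,\ d^{1/k}/\log|\Sigma_A| \,\}$, i.e.\ the upper endpoint of the window allowed by Theorem~\ref{thm:LC1}, which that theorem turns into hardness of approximating girth-$>k$ Label Cover within a factor $\alpha/c_1$, for the absolute constant $c_1$ coming from the theorem.

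It then remains to check the two numeric facts $\alpha \geq 18\log n$ (so $\alpha$ really lies in the admissible window) and $\alpha/c_1 \geq 2^{(\log^{1-\epsilon} n)/k}$, which is cleanest after splitting on which term defines $\alpha$. If $\alpha = 1/s = 2^{\ell/c_0}$, then $\log(\alpha/c_1) = \Theta(\ell) = \Theta\big((\log n)^{1/(1+\epsilon)}\big)$; since $\tfrac{1}{1+\epsilon} = (1-\epsilon) + \tfrac{\epsilon^2}{1+\epsilon} > 1-\epsilon$, this exceeds $(\log^{1-\epsilon}n)/k$ already for $k = 3$, with a polynomial-in-$\log n$ factor of slack that absorbs $c_0$ and $c_1$, and also $\alpha = 2^{\Omega((\log n)^{1/(1+\epsilon)})} \gg 18\log n$. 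If instead $\alpha = 15^{\ell/k}/(\ell\log 7)$, then $\log(\alpha/c_1) = (\ell\log 15)/k - O(\log\ell)$, so the required inequality becomes, after multiplying by $k$, that $\ell\log 15 - k\cdot O(\log\log n) \geq \log^{1-\epsilon}n$; this holds because $\ell = \Theta((\log n)^{1/(1+\epsilon)})$ dominates $\log^{1-\epsilon}n$ (again by $\tfrac{1}{1+\epsilon} > 1-\epsilon$), while $k \leq \log^{1-2\epsilon}n$ forces $k\cdot O(\log\log n) = o(\log^{1-\epsilon}n)$; and $\alpha \geq 18\log n$ follows since $\ell/k \geq \Omega\big((\log n)^{\epsilon(1+2\epsilon)/(1+\epsilon)}\big) \to \infty$, so $15^{\ell/k}$ outgrows any fixed power of $\log n$.

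The step I expect to require the most care is precisely this parameter bookkeeping: the target ratio is measured against the size $n$ of the \emph{constructed} Label Cover instance, and $n$ is itself a function of $\ell$, so the choice of $\ell$ must make the loop $\ell \mapsto n \mapsto \log^{1-\epsilon}n$ close up, with essentially the only slack being the gap $\tfrac{1}{1+\epsilon} - (1-\epsilon) = \tfrac{\epsilon^2}{1+\epsilon} > 0$. One must also keep both regimes of $k$ in view --- small $k$, where soundness $1/s$ is the binding constraint on $\alpha$, versus $k$ near $\log^{1-2\epsilon}n$, where the girth constraint $d^{1/k}/\log|\Sigma_A|$ binds --- and confirm in each that the lower-order terms ($\log\ell$, $\log c_0$, $\log c_1$, $k\log\log n$) are genuinely negligible against the leading term. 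None of this is deep, but a careless choice of $\ell$ or $\alpha$ would break the bound.
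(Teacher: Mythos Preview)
Your proposal is correct and follows essentially the same approach as the paper: choose $\ell = (\log n')^{1/\epsilon}$, take $\alpha$ at the upper endpoint $\min\{1/s,\, d^{1/k}/\log|\Sigma_A|\}$ of the window in Theorem~\ref{thm:LC1}, and then verify the parameter bookkeeping. Your write-up is in fact more explicit than the paper's, which compresses the verification of $\alpha \geq 18\log n$ and $\alpha/c_1 \geq 2^{(\log^{1-\epsilon} n)/k}$ into the single sentence ``Using a smaller $\epsilon$ to change $1/(1+\epsilon)$ to $1-\epsilon$, as well as to get rid of lower order terms, gives the theorem,'' whereas you carry out the case split on which term defines $\alpha$ and identify the key slack $\tfrac{1}{1+\epsilon} - (1-\epsilon) = \tfrac{\epsilon^2}{1+\epsilon}$ explicitly.
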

\begin{proof}
Set $\ell = \log^{1/\epsilon} n'$, so the size of the Label Cover instance is $n = (5n')^{\log^{1/\epsilon} n'}$ and $\ell^{\epsilon} = \log n'$.   Note that $\log n = \Theta(\ell \log n') = \Theta(\ell^{1+\epsilon})$, so $\ell = \Theta((\log n)^{1/(1+\epsilon)})$.  Let $\alpha = \min\{2^{\ell / c}, 15^{\ell / k} / \ell \log 7\}$. Assuming that $k$ is at most $\log^{(1/(1+\epsilon))- \gamma} n$ for some constant $\gamma > 0$ implies that $\alpha \geq 16 \log n$, so applying Theorem~\ref{thm:LC1} to this construction implies that, assuming $NP \not\subseteq BPTIME(n^{O(\ell)})$, there is no polynomial time algorithm that can distinguish between instances of Label Cover with girth greater than $k$ in which all edges are satisfiable and instances in which at most a $c/\alpha$ fraction are satisfiable (for some constant $c$).  Using a smaller $\epsilon$ to change $1/(1+\epsilon)$ to $1-\epsilon$, as well as to get rid of lower order terms, gives the theorem.
\end{proof}

On the other hand, if we apply Theorem~\ref{thm:main1} to this construction then we get the following theorem:

\begin{theorem}
\label{maintheorem}
Assuming $NP \not\subseteq BPTIME(2^{polylog(n)})$, for any constant $\epsilon > 0$ and $3 \leq k \leq \log^{1-2\epsilon} n$ there is no polynomial time algorithm that can distinguish between instances of Min-Rep with supergirth greater than $k$ that have a REP-cover of size at most $\tilde n$ and instances in which the smallest REP-cover has size at least $2^{(\log^{1-\epsilon} n) / k} \cdot \tilde n$, where $n$ is the size of the Min-Rep graph and $\tilde n$ is the size of the supergraph.  Thus there is no polynomial time algorithm that can approximate Min-Rep with supergirth greater than $k$ to a factor better than $2^{(\log^{1-\epsilon} n) / k}$.
\end{theorem}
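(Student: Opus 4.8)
The plan is to obtain Theorem~\ref{maintheorem} by feeding the concrete Label Cover family constructed just above (the standard Max-3SAT(5) Label Cover, regularized to degree $15$ by taking three copies of the clause side and five copies of the variable side with a full copy of the original edge set between each pair, followed by $\ell$-fold parallel repetition) into Theorem~\ref{thm:main1}, in exactly the same way that Theorem~\ref{thm:LC_main} fed it into Theorem~\ref{thm:LC1}. After the construction the supergraph is $d$-regular with $d = 15^{\ell}$, the alphabets have sizes $|\Sigma_A| = 7^{\ell}$ and $|\Sigma_B| = 2^{\ell}$, the soundness is $s = 2^{-\ell/c}$ for the parallel-repetition constant $c$, the supergraph has $\tilde n = \Theta((5n')^{\ell})$ vertices and the Min-Rep graph has $n = \Theta((35 n')^{\ell})$ vertices. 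The one fact to record about these sizes is that $\log n = \Theta(\log \tilde n) = \Theta(\ell \log n')$, so the hardness expressed against $n$ or against $\tilde n$ is the same up to the usual renaming of the constant in the exponent.

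I would then set $\ell = \log^{1/\epsilon} n'$ exactly as in the proof of Theorem~\ref{thm:LC_main}, so that $\log n = \Theta(\log\tilde n) = \Theta(\ell^{1+\epsilon})$ and hence $\ell = \Theta((\log n)^{1/(1+\epsilon)})$, and take $\alpha = \min\{2^{\ell/c},\, 15^{\ell/k}/(\ell\log 7)\}$, which is precisely $\min\{1/s,\, d^{1/k}/\log|\Sigma_A|\}$. Next I would verify the hypotheses of Theorem~\ref{thm:main1}: the upper bound on $\alpha$ holds by this choice, and the lower bound $18\log\tilde n \le \alpha$ holds whenever $3 \le k \le \log^{1-2\epsilon} n$ — this is the identical computation used in Theorem~\ref{thm:LC_main}, where $k \le \log^{1-2\epsilon} n$ is packaged as $k \le \log^{(1/(1+\epsilon)) - \gamma} n$ for a suitable constant $\gamma>0$. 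Applying Theorem~\ref{thm:main1} then gives that, unless NP has randomized algorithms running in time $n'^{O(\ell)} = 2^{O(\ell\log n')} = 2^{\poly\log n'} = 2^{\poly\log n}$ (the last equality because $\ell\log n'$ is a fixed polynomial in $\log n'$ and $\log n' = \poly\log n$), one cannot distinguish Min-Rep instances of supergirth greater than $k$ with a REP-cover of size at most $\tilde n$ from those whose smallest REP-cover has size at least $\tilde n\sqrt{\alpha}/c_0$ for some constant $c_0$.

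It then remains to check that the inapproximability factor $\sqrt{\alpha}/c_0$ is at least $2^{(\log^{1-\epsilon} n)/k}$. Writing $\sqrt{\alpha} = \min\{2^{\ell/(2c)},\, 15^{\ell/(2k)}/\sqrt{\ell\log 7}\}$ and absorbing $c_0$ and the $\poly\log n$ factor $\sqrt{\ell\log 7} = 2^{O(\log\log n)}$ into the exponent, both branches of the minimum reduce — after multiplying through by $k$ — to checking that the exponent is $\Omega(\ell) = \Omega((\log n)^{1/(1+\epsilon)})$: in the first branch the exponent $\ell/(2c) - O(\log\log n)$ times $k \ge 3$ is already $\Omega(\ell)$, and in the second branch $(\log_2 15)\ell/(2k) - O(\log\log n)$ times $k$ is $\Omega(\ell) - k\cdot O(\log\log n) = \Omega(\ell)$ since $k \le \log^{1-2\epsilon}n$. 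Because $1/(1+\epsilon) > 1-\epsilon$, this quantity exceeds $\log^{1-\epsilon} n$ for all large $n$ and all $3 \le k \le \log^{1-2\epsilon}n$, so $\sqrt{\alpha}/c_0 \ge 2^{(\log^{1-\epsilon} n)/k}$; a final rescaling of $\epsilon$ (to pass from $1/(1+\epsilon)$ to $1-\epsilon$ and to swallow lower-order terms), exactly as at the end of the proof of Theorem~\ref{thm:LC_main}, yields the statement.

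The only point that needs real care — and the place where this proof differs from that of Theorem~\ref{thm:LC_main} — is the bookkeeping between the two size parameters. Theorem~\ref{thm:main1} and Lemma~\ref{lem:MR_reduction} measure the REP-cover gap against the \emph{supergraph} size $\tilde n$, and only a $\sqrt{\alpha}$ gap (rather than the $\Theta(\alpha)$ of the Label Cover case) survives the reduction to Min-Rep, whereas the theorem is stated against the \emph{Min-Rep graph} size $n$. The argument goes through only because $\log n$ and $\log\tilde n$ coincide up to a constant factor, so that $\ell = \Theta((\log n)^{1/(1+\epsilon)})$ no matter which size is plugged in, and the loss of a square root merely halves an exponent that is then restored by the $\epsilon$-slack. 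Everything else is a direct transcription of the corresponding steps in the proof of Theorem~\ref{thm:LC_main}.
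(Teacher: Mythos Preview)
Your proposal is correct and follows essentially the same approach as the paper's own proof: set $\ell=\log^{1/\epsilon}n'$, take $\alpha=\min\{2^{\ell/c},\,15^{\ell/k}/(\ell\log 7)\}$, apply Theorem~\ref{thm:main1}, and then absorb the $\sqrt{\cdot}$, the constant $c_0$, and lower-order terms by rescaling $\epsilon$ (using $1/(1+\epsilon)>1-\epsilon$). Your write-up is in fact more explicit than the paper's about the bookkeeping between $n$ and $\tilde n$ and about why both branches of the minimum yield an exponent $\Omega(\ell)$, but the argument is the same.
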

\begin{proof}
As before, we set $\ell = \log^{1/\epsilon} n'$ (so $\ell^{\epsilon} = \log n'$).  Then $n = (5n')^{\ell} \cdot 7^{\ell} + (5n')^{\ell} \cdot 2^{\ell} \leq 2 (35n')^{\ell}$ is the number of vertices in the resulting Min-Rep instance.  Note that, as in Label Cover, $\log n = \Theta(\ell \log n') = \Theta(\ell^{1+\epsilon})$, so $\ell = \Theta((\log n)^{1/(1+\epsilon)})$.  Applying Theorem~\ref{thm:main1} to this construction implies that unless $NP \subseteq BPTIME(2^{polylog(n)})$, there is no polynomial time algorithm that can distinguish between instances of Min-Rep with supergirth larger than $k$ where the smallest REP-cover has size at most $2(5n')^{\ell}$ and instances where the smallest REP-cover has size at least $\Omega\left((5n')^{\ell} \sqrt{\min\{2^{\ell / c}, 15^{\ell / k} / \ell \log 7\}}\right)$ (assuming $k \leq \log^{(1/(1+\epsilon)) - \gamma} n$ for some constant $\gamma > 0$).  Plugging in our choice of $\ell$, and using smaller values of $\epsilon$ to get rid of lower order terms and replace $1/(1+\epsilon)$ by $1-\epsilon$, gives the theorem.
\end{proof}

\section{Hardness of basic $k$-spanner}

We now show how to reduce Min-Rep with large supergirth to the basic $k$-spanner problem.  This reduction is from Elkin and Peleg~\cite{ELD}, which is in turn very similar to reductions from Min-Rep to other spanner problems developed by Elkin and Peleg~\cite{EP07}.  We include it here just for completeness, and try to use their notation when possible.  Suppose that we are given a Min-Rep instance with supergraph $\tilde G= (A,B,\tilde E)$ with supergirth at least $k+2$, as well as the Min-Rep graph $G=(A \times \Sigma_A, B \times \Sigma_B, E)$.  As before, for $i \in A$ let $A_i = \{(i, \alpha) : \alpha \in \Sigma_A\}$ be the set of vertices in the Min-Rep graph corresponding to the vertex $i$ in the supergraph, and similarly for $i \in B$ let $B_i = \{(i, \beta) : \beta \in \Sigma_B\}$.  Let $n = |A| \cdot |\Sigma_A| + |B| \cdot |\Sigma_B|$ denote the size of the Min-Rep graph, and let $\tilde n = |A| + |B|$ denote the size of the supergraph.  Since this instance comes from our previous reduction, we can also assume that $|A| = |B| = \tilde n / 2$.  Let $k_A = \lfloor \frac{k-1}{2}\rfloor$, let $k_B = \lceil \frac{k-1}{2} \rceil$, and let $x = n^2 / \tilde n$.  To define the $k$-spanner graph $G'$, we first define two new vertex sets:
\begin{equation*}
S = \{s^p_{ij} : i \in A, j \in [k_A], p \in [x]\} \text{ and } T = \{t^p_{ij} : i \in B, j \in [k_B], p \in [x]\}.
\end{equation*}

The vertex set of our graph $G'$ will be $V' = A \cup B \cup S \cup T$.  Now we define a collection of edge sets:
\begin{align*}
E_M &= \{(s_{ij}^p, s^p_{i(j+1)}) : p \in [x], i \in A, j \in [k_A - 1]\}, \\
& \ \ \ \ \cup \{(t^p_{ij}, t^p_{i(j+1)}) : p \in [x], i \in B, j \in [k_B - 1]\}, \\
E_{sA} &= \{(s^p_{i1}, u) : i \in A, u \in A_i, p \in [x]\}, \\
E_{tB} &= \{(w, t^p_{j1}) : j \in B, w \in B_j, p \in [x]\}, \\
E_{\tilde G}^p &= \{(s^p_{i k_A}, t^p_{j k_B}) : i \in A, j \in B, (i,j) \in \tilde E\}, \\
E_{\tilde G} &= \cup_{p=1}^x E_{\tilde G}^p.
\end{align*}
We let the edge set $E'$ of $G'$ be $E \cup E_M \cup E_{sA} \cup E_{tB} \cup E_{\tilde G}$.   Note are that when $k=3$, $k_A = k_B = 1$, so $E_M$ is empty.  Also note that for each $p \in [x]$, the edges $E^p_{\tilde G}$ form a graph isomorphic to the supergraph $\tilde G$.

For an edge $(s^p_{i k_A}, t^p_{j k_B}) \in E^p_{\tilde G}$, we say that a spanning path $P$ (i.e.~a path from $s^p_{i k_A}$ to $t^p_{j k_B}$ of length at most $k$) is a \emph{canonical path} if it has the form $s^p_{i k_A}, s^p_{i (k_A-1)}, \dots, s^p_{i 1}, u_i, w_j, t^p_{j 1}, t^p_{j 2}, \dots, t^p_{j k_B}$.  In other words, the path first follows the path of $E_M$ edges to $s^p_{i 1}$, then uses an edge from $E_{sA}$ to get to one of the original Min-Rep nodes $u_i$ that corresponds to supernode $i$, then takes an original Min-Rep edge to $w_j$, then an $E_{tB}$ edge out to $t^p_{j 1}$, and then follows $E_M$ edges to $t^p_{j k_B}$.  Note that such a path must exist, or else there are no edges from $A_i$ to $B_j$, in which case $(i,j)$ would not be an edge in the supergraph, which would mean that $(s^p_{i k_A}, t^p_{j k_B})$ would not be an edge in $E^p_{\tilde G}$.  Furthermore, any canonical path has length exactly $(k_A-1) +1+1+1+(k_B-1) = k$, so is indeed a valid spanning path.

\begin{lemma} \label{lem:span1}
In any $k$-spanner $H$ of $G'$, every edge in $E_{\tilde G}$ is either included in $H$ or is spanned by a canonical path.
\end{lemma}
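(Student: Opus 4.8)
The plan is as follows. Fix an edge $(s^p_{ik_A}, t^p_{jk_B}) \in E_{\tilde G}$ and suppose it is not in $H$; since $H$ is a $k$-spanner there is a path $P$ in $H$ (hence in $G'$) of length at most $k$ between its endpoints, and it suffices to show that any such $P$ must be a canonical path. (A canonical path has length exactly $k = k_A + 1 + k_B$, so it is a valid spanning path, and the lemma follows.) The identity $k = k_A + 1 + k_B$ is what makes the argument work: there is no slack to waste.

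First I would establish the relevant distance structure of $G'' := G' \setminus E$, i.e.\ $G'$ with all original Min-Rep edges deleted. In $G''$ the $S$-chains hang off the $A$-side Min-Rep vertices via $E_{sA}$, the $T$-chains hang off the $B$-side vertices via $E_{tB}$, and the only links between the two ``sides'' are the $E_{\tilde G}$ edges joining chain-tops. Enumerating these few local moves yields: from $s^p_{ik_A}$, the nearest $A$-side Min-Rep vertices in $G''$ are exactly those of $A_i$, at distance exactly $k_A$, with the \emph{unique} geodesics being the descents down chain $p$ followed by one $E_{sA}$ edge; and the nearest $B$-side Min-Rep vertices in $G''$ are at distance at least $k_B + 1$. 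The symmetric statements hold from $t^p_{jk_B}$ with the roles of $A,B$ and $k_A,k_B$ swapped. I would also show $P$ must use at least one Min-Rep edge: otherwise $P \subseteq G''$, and projecting each vertex of $P$ to its supernode turns $P$ into a walk in $\tilde G$ from $i$ to $j$ whose every step is a superedge coming from an $E_{\tilde G}$ edge; since $P$ is simple and these edges pass through distinct chain-tops, this walk is a \emph{simple} path in $\tilde G$, so (being of length at most $k$) adjoining the superedge $(i,j)$ would produce a cycle of length at most $k+1$, contradicting supergirth $\ge k+2$ — unless the walk has length $1$, in which case $P$ is the excluded direct edge.

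Now let $e_1$ and $e_c$ be the first and last Min-Rep edges on $P$, let $P_1$ be the portion of $P$ before $e_1$, let $P_2$ be the portion after $e_c$, and let the portion from $e_1$ to $e_c$ inclusive have length $\mu \ge 1$ (with $\mu = 1$ iff $c = 1$). Then $P_1, P_2$ use no Min-Rep edges, so they lie in $G''$, and $|P_1| + \mu + |P_2| \le k$. Since $s^p_{ik_A}$ is incident to no Min-Rep edge, $P_1$ runs from $s^p_{ik_A}$ to an endpoint of $e_1$, which lies on the $A$-side or the $B$-side; by the distance facts this gives $|P_1| \ge k_A$ or $|P_1| \ge k_B + 1$ respectively, and symmetrically $|P_2| \ge k_B$ or $|P_2| \ge k_A + 1$. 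Substituting into $|P_1| + \mu + |P_2| \le k = k_A + k_B + 1$ and using $0 \le k_B - k_A \le 1$, three of the four combinations are ruled out: two of them immediately by the length count, and the ``$A$-side/$A$-side'' combination because it forces $\mu = 1$, hence $c = 1$, whereupon the single Min-Rep edge $e_1$ would have both endpoints on the $A$-side, which is impossible. The surviving combination — $P_1$ ending on the $A$-side and $P_2$ starting on the $B$-side — forces equality throughout: $|P_1| = k_A$, $|P_2| = k_B$, $\mu = 1$. By uniqueness of geodesics in $G''$, $P_1$ is then the descent $s^p_{ik_A}, \dots, s^p_{i1}, u$ for some $u \in A_i$, $P_2$ is the ascent $w, t^p_{j1}, \dots, t^p_{jk_B}$ for some $w \in B_j$, and the middle piece is a single Min-Rep edge $(u,w)$; that is, $P$ is exactly a canonical path.

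The main obstacle is the first step: pinning down the distances in $G''$ and in particular the \emph{uniqueness} of the shortest paths, which requires carefully tracking the handful of ways to move among the chains, the $A_i$/$B_j$ sets, and across $E_{\tilde G}$ edges, while handling the degenerate case $k = 3$ (where $k_A = k_B = 1$ and $E_M = \emptyset$). Once that structure is in hand, the closing case analysis is short precisely because $k = k_A + 1 + k_B$ leaves no room to maneuver.
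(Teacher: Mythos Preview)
Your proof is correct and takes a genuinely different route from the paper's. The paper argues by a direct vertex-by-vertex chase: it takes the \emph{shortest} simple spanning path $P$, lets $x_\alpha$ be the first vertex of $P$ that is not a layer-$p$ chain-top (using the girth hypothesis to show such an $x_\alpha$ exists), and then does a case split on whether $x_\alpha$ sits on an $s$-chain or a $t$-chain, following the forced moves through degree-$2$ chain vertices and eliminating each branch by length or by the minimality of $P$. Your argument instead extracts the relevant metric facts about $G'' = G' \setminus E$ once, decomposes an \emph{arbitrary} spanning path at its first and last Min-Rep edges, and lets the tight identity $k = k_A + 1 + k_B$ together with bipartiteness force $|P_1| = k_A$, $|P_2| = k_B$, $\mu = 1$; uniqueness of the geodesics then pins down the canonical form. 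This buys a cleaner separation of concerns and avoids the auxiliary ``shortest path'' hypothesis the paper uses to kill one sub-branch.

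One imprecision worth tightening: in your ``$P$ uses no Min-Rep edge'' step, the inference ``$P$ is simple, hence the $E_{\tilde G}$ edges of $P$ pass through distinct chain-tops, hence the projected walk in $\tilde G$ is simple'' is not quite valid as stated. Distinct chain-tops give distinct supernodes only when all the chain-tops lie in a single layer; but $G''$ still contains the $E_{sA}$ and $E_{tB}$ edges, so $P$ could in principle pass through a Min-Rep \emph{vertex} and change layers without using any Min-Rep \emph{edge}. The fix is short: if $P$ avoids Min-Rep vertices altogether it is trapped in layer $p$ and (since chains are dead-ends) in fact in $E^p_{\tilde G} \cong \tilde G$, where your girth argument applies verbatim; and if $P$ does visit a Min-Rep vertex, decomposing at the first and last such vertex and rerunning your length count (now with middle segment of length $0$ or $\ge 2$, and noting that the two segments abutting a single Min-Rep vertex must lie in distinct layers yet each must stay in layer $p$) disposes of that sub-case as well.
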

\begin{proof}
Suppose this is false, and let $e = (s^p_{i k_A}, t^p_{j k_B})$ be an edge in $E_{\tilde G}$ which is not in $H$ but is also not spanned by a canonical path.  Let $P = s^p_{i k_A} = x_1, x_2,  \dots, x_{q-1}, x_q = t^p_{j k_B}$ be the shortest simple path in $H$ that does span $e$ (such a path with $q \leq k+1$ must exist since $H$ is a $k$-spanner of $G'$).  Let $U = \{s^p_{i' k_A} : i' \in A\} \cup \{t^p_{j' k_B} : j' \in B\}$ be the set of vertices that are incident on edges of $E^p_{\tilde G}$.  Let $x_{\alpha}$ be the first vertex in $P$ that is not in $U$.  Such a vertex must exist, since if it does not then $P$ is a path of length at most $k$ between $s^p_{i k_A}$ and $t^p_{j k_B}$ that is contained in $E^p_{\tilde G}$.  This is a contradiction, since $E^p_{\tilde G}$ is isomorphic to the supergraph $\tilde G$ which by assumption has girth at least $k+2$, while adding $e$ to $P$ would give a cycle of length at most $k+1$.

So $x_{\alpha}$ is the first vertex in $P$ that is not in $U$.  This means that it must be either $s^p_{i'(k_A-1)}$ or $t^p_{j'(k_B-1)}$ for some $i' \in A$ or $j' \in B$.  If it is $t^p_{j'(k_B-1)}$, then $P$ must keep following $E_M$ edges to get that $x_{\alpha + k_B-2}$ is $t^p_{j'1}$ and thus $x_{\alpha + k_B - 1}$ is $w$ for some $w \in B_{j'}$.  Since $\alpha \geq 2$ and $k_A + k_B = k-1$, there can be only $k_A+1$ more vertices on the path.  But it is obvious that, if $j' \neq j$ (which it must be since otherwise the path would have already hit $t^p_{j k_B}$), there is no way to complete the path.

If $x_{\alpha} = s^p_{i'(k_A-1)}$, then since the intermediate vertices have degree $2$ and $P$ is simple it must be the case that $x_{\alpha+k_A-2}$ is $s^p_{i'1}$, and so $x_{\alpha + k_A - 1}$ is $u$ for some $u \in A_{i'}$.  From $u$ the next vertex could either be $s^{p'}_{i'1}$ for some other $p'$, or could be $w \in B_{j'}$ for some $j' \in B$.  If $x_{\alpha + k_A}$ is $s^{p'}_{i'1}$ for some $p'$, then the next hop cannot be to a node in $A_{i'}$, or else we could have gotten to this node instead of to $u$ earlier, contradicting our assumption that $P$ is the shortest path.  Thus if $x_{\alpha + k_A}$ is $s^{p'}_{i'1}$ for some $p'$, it must be that $P$ follows $E_M$ edges backwards to get that $x_{\alpha+ 2k_A - 1} = s^{p'}_{i' k_A}$.  Note that $\alpha \geq 2$ by definition, and $2k_A \geq k-2$, so either $x_k$ or $x_{k+1}$ is $s^{p'}_{i' k_A}$.  Either one is an obvious contradiction, since $x_{k+1}$ is supposed to be $t^p_{j k_B}$, which is not adjacent to $s^{p'}_{i' k_A}$.

This means that from $u$ the next vertex in $P$ must be $w \in B_{j'}$ for some $j' \in B$, or equivalently that $x_{\alpha + k_A} = w$.  Since $\alpha \geq 2$ and $k_A + k_B = k-1$,  there can be at most $k_B$ more vertices on $P$.  In order to get to $t^p_{j k_B}$ via $E_M$ edges, it obviously must be the case that $j' = j$ and $P$ is actually a canonical path.  Otherwise, if the next hop from $w$ is another edge from $E$ then it is back on the $A$ side of the Min-Rep graph, which is clearly too far away from $t^p_{j k_B}$ to finish the path.  Thus $P$ must actually be a canonical path.
\end{proof}

We will now define some edge sets that will be useful in the next lemma.  For each $i \in A$, let $u_i \in A_i$ be some arbitrarily chosen vertex in $A_i$, and let $\hat E_i = (\cup_{u \in A_i} (s^1_{i1}, u)) \cup (\cup_{p \in [x]} (s^p_{i1}, u_i))$.  Similarly, for each $j \in B$ we arbitrarily choose some node $w_j \in B_j$, and let $\hat E_j = (\cup_{w \in B_j} (w, t^1_{j1})) \cup (\cup_{p \in [x]} (w_j, t^p_{j1}))$.  Let $\hat E = (\cup_{i \in A} \hat E_i) \cup (\cup_{j \in B} \hat E_j)$.  Clearly $|\hat E| = n + x \tilde n$, since $|\hat E_i| = |A_i| + x$ for each $i \in A$ and $|\hat E_j| = |B_j| + x$ for each $j \in B$.

We say that a spanner $H$ of $G'$ is a \emph{proper} $k$-spanner if it does not include any edge of $E_{\tilde G}$, which by Lemma~\ref{lem:span1} implies that every edge of $E_{\tilde G}$ is spanned by a canonical path.

\begin{lemma} \label{lem:proper}
Any $k$-spanner $H$ for $G'$ can be converted in polynomial time into a proper $k$-spanner $H'$ of size at most $6 |H|$.
\end{lemma}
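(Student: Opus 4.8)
The plan is to fix a $k$-spanner $H$ of $G'$ and surgically replace its "bad" edges — the edges of $E_{\tilde G}$ it contains, together with whatever $E_{sA}$, $E_{tB}$ edges are needed to make the canonical paths work — by a bounded-size canonical substitute, arguing that the blow-up is at most a constant factor of $6$. Concretely, I would start from $H$ and form $H'$ as follows: first delete from $H$ every edge of $E_{\tilde G}$; then add the edge set $\hat E$ (the "canonical backbone" defined just before the lemma, of size $n + x\tilde n$); keep all edges of $H$ in $E \cup E_M$ unchanged. The point of adding $\hat E$ is that for every supernode $i \in A$ it provides, via $\hat E_i$, both (i) edges from $s^1_{i1}$ to \emph{every} vertex of $A_i$ and (ii) an edge from $s^p_{i1}$ to the fixed representative $u_i$ for every copy $p \in [x]$; symmetrically for $B$ with the $t$-side and $w_j$. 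So after adding $\hat E$, any edge $(s^p_{i k_A}, t^p_{j k_B}) \in E^p_{\tilde G}$ whose relation is witnessed by a pair $(u_i, w_j)$ that lies in the original Min-Rep edge set can be spanned by a canonical path using the $E_M$ chains on both sides, the $\hat E$ edges to get from $s^p_{i1}$ to $u_i$ and from $w_j$ to $t^p_{j1}$, and one edge of $E$ between $u_i$ and $w_j$. The subtlety I need to handle is whether such a witnessing edge $(u_i,w_j)\in E$ is actually present in $H'$; I would argue that since $H$ was a $k$-spanner and by Lemma~\ref{lem:span1} every edge of $E_{\tilde G}$ is either in $H$ or canonically spanned, for each supernode we can choose $u_i$ (respectively $w_j$) to be a representative that $H$ actually uses in a canonical path, so the needed $E$ edges survive; alternatively, and more cleanly, we simply add to $H'$ the at most one $E$-edge per superedge $(i,j) \in \tilde E$ that realizes the relation and was used by a canonical spanning path of $H$, and bound the number of such edges by $|\tilde E| \le |E_{\tilde G}|$.

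The size accounting is the heart of the constant $6$. Let $m = |E_{\tilde G}| = x |\tilde E|$ be the number of deleted edges. I claim $|H| \ge $ something proportional to $m + x\tilde n + n$ already, which is what lets the added $\hat E$ and the added $E$-edges be absorbed. The key observations are: (a) for each copy $p$ and each edge $(i,j)\in\tilde E$, either $(s^p_{ik_A}, t^p_{jk_B}) \in H$ or $H$ contains a full canonical path for it, and these canonical paths, restricted to the $E_M$ chains hanging off $s^p_{ik_A}$ and $t^p_{jk_B}$, force many $E_M$-edges of $H$; (b) the vertices $s^p_{ik_A}$ for $i \in A$, $p\in[x]$ each have degree exactly $\deg_{\tilde G}(i)$ in $E^p_{\tilde G}$ plus one $E_M$ edge, and to satisfy the spanner constraint for each incident $E_{\tilde G}$ edge $H$ must retain, near each such vertex, enough structure that $|H \cap (E_M \cup E_{\tilde G} \cup E_{sA} \cup E_{tB})| = \Omega(x\tilde n + m)$; and (c) the original Min-Rep edges and the $E_{sA}, E_{tB}$ edges that any $k$-spanner must keep give $|H| = \Omega(n)$ as well. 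Putting these lower bounds on $|H|$ against the upper bound $|H'| \le |H| + |\hat E| + |\tilde E| \le |H| + (n + x\tilde n) + |\tilde E|$ and checking the constants gives $|H'| \le 6|H|$. I would follow the constant-chasing in Elkin--Peleg here rather than reinvent it.

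The main obstacle I anticipate is (b): showing that a $k$-spanner $H$ cannot be cheap on the "apparatus" vertices $S \cup T$ — i.e. that the $x$ disjoint copies of the supergraph, together with the long $E_M$ chains, genuinely force $H$ to pay $\Omega(x \tilde n)$ even before we count the $E_{\tilde G}$ edges it keeps. This is exactly where the large-supergirth hypothesis (supergirth $\ge k+2$) is used, via Lemma~\ref{lem:span1}: without it, $E_{\tilde G}$ edges could be spanned by short non-canonical paths that reuse structure across many superedges, and the per-edge charging would collapse. So the proof must invoke Lemma~\ref{lem:span1} to pin every $E_{\tilde G}$ edge of $G'$ not in $H$ to a \emph{canonical} path, then charge the $E_M$ and $E_{sA}/E_{tB}$ edges of those canonical paths to distinct parts of $H$ (the $E_M$ chains are vertex-disjoint across different $(i,p)$, which makes the charging injective up to a constant). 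Once that injectivity is set up, the rest is the bookkeeping sketched above, and the conversion is clearly polynomial-time since it only deletes the $E_{\tilde G}$ edges of $H$ and adds the explicitly-described sets $\hat E$ and the realizing $E$-edges.
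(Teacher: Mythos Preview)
Your construction of $H'$ has a genuine gap: it need not be a $k$-spanner at all. You delete $H\cap E_{\tilde G}$ and add $\hat E$, but you only keep the $E_M$ and $E$ edges that were already in $H$. The problem is that $H$ is \emph{not} forced to contain every $E_M$ edge. For instance, with $k_A=k_B=2$ (so $k=5$) the edge $(s^p_{i1},s^p_{i2})$ can be $5$-spanned in $G'$ by the path $s^p_{i1}\to u\to w\to t^p_{j1}\to t^p_{j2}\to s^p_{i2}$, which uses an $E_{\tilde G}$ edge at the last step. If $H$ chose that route, then after you delete $E_{\tilde G}\cap H$ the edge $(s^p_{i1},s^p_{i2})$ is neither present nor spanned in $H'$; worse, the $E_M$ chain from $s^p_{ik_A}$ down to $s^p_{i1}$ is broken, so no canonical path exists for the $E_{\tilde G}$ edges you removed at copy $p$. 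The same failure hits $E$ edges: an edge $(u,w)\in E$ can be $k$-spanned in $H$ by going out along an $E_M$ chain, across an $E_{\tilde G}$ edge, and back, so deleting $E_{\tilde G}$ can un-span it. Your charging argument in (b) is aimed at a size lower bound, not at establishing that these chains are actually present, so it does not close this hole. Separately, the fixed representatives $u_i,w_j$ in $\hat E$ are chosen \emph{per supernode}, not per superedge, so there is no reason $(u_i,w_j)\in\pi_{(i,j)}$; a single $u_i$ cannot serve all neighbors $j$ of $i$ simultaneously, and your fallback of adding one witnessing $E$-edge per superedge still leaves you without the $E_{sA}/E_{tB}$ hooks $(s^p_{i1},u)$ and $(w,t^p_{j1})$ for that specific $u,w$ across all $p$.

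The paper's proof avoids all of this by being far less surgical: it sets $H_1=(H\setminus E_{\tilde G})\cup E\cup E_M\cup \hat E$, i.e.\ it throws in \emph{all} of $E$, $E_M$, and $\hat E$. The size bound then needs no charging via Lemma~\ref{lem:span1} at all---just the trivial fact that a spanner is connected, so $|H|\ge |V'|-1$, together with the arithmetic $|E|,\,|E_M|,\,|\hat E|\le |V'|-1$ (using $x=n^2/\tilde n$ and $k\ge 3$). That gives $|H_1|\le 4|H|$. Then for each edge of $H\cap E_{\tilde G}$ one adds two $E_{sA}/E_{tB}$ edges to complete a canonical path (the needed $E$ edge and $E_M$ chains are already present in $H_1$), costing at most $2|H|$ more. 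The elaborate lower bounds you sketch in (a)--(c) are unnecessary; the whole argument rests on $|H|\ge |V'|-1$.
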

\begin{proof}
We first let $H_1$ be the edge set $(H \setminus E_{\tilde G}) \cup E \cup E_M \cup \hat E$.  It is obvious that all edges of $G'$ are $k$-spanned by $H_1$ except for $H \cap E_{\tilde G}$: edges in $E, E_M$, and $\hat E$ are self-spanned, edges in $E_{sA}$ and $E_{tB}$ are $3$-spanned by $\hat E$, and edges in $E_{\tilde G} \setminus H$ must have been spanned in $H$ by a canonical path (by Lemma~\ref{lem:span1}), which is still contained in $H_1$.  We now claim that $H_1$ is small.  Note that $|V'| = n + x(\tilde n / 2) k_A + x(\tilde n /2) k_B = n + x(\tilde n / 2) (k-1)$.  So
\begin{align*}
|H_1| &\leq |H| + |E| + |E_M| + |\hat E| \\
&\leq |H| + n^2 + x (\tilde n/2) (k_A-1) + x (\tilde n /2) (k_B-1) + n + \tilde n x \\
& \leq |H| + x \tilde n + x(\tilde n / 2) (k-3) + n + \tilde n x \\
& \leq |H| + (|V'|-1) + (|V'|-1) + (|V'|-1) \\
& \leq 4|H|,
\end{align*}
where we used the fact that $k \geq 3$ and the fact that $|H| \geq |V'| - 1$ (since it is connected).

Now we need to span the edges in $H \cap E_{\tilde G}$.  For each such edge $(s^p_{i k_A}, t^p_{j k_B})$ there is an associated superedge $(i,j)$.  We get $H'$ from $H_1$ by adding, for each such edge, the edges $(s^p_{i 1}, u)$ and $(w, t^p_{j 1})$ for some $u \in A_i$ and $w \in B_j$ so that $(u,w) \in E$ (note that some such edge must exist or else $\pi_{(i,j)}$ is empty).  This obviously creates a canonical path that spans $(s^p_{i k_A}, t^p_{j k_B})$ (since $E_M \subseteq H_1$) while costing at most $2|H|$, and thus $H'$ is a valid $k$-spanner of size at most $6 |H|$ as claimed.
\end{proof}

We now can prove one direction of the reduction:

\begin{lemma} \label{lem:reduction1}
Given a $k$-spanner $H$ for $G'$, we can construct in polynomial time a REP-cover for $G$ of size at most $6|H| / x$
\end{lemma}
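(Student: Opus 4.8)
The plan is to go from a $k$-spanner $H$ of $G'$ to a REP-cover of the Min-Rep instance $G$ (with supergraph $\tilde G$) by first cleaning $H$ up using Lemma~\ref{lem:proper}, and then exploiting the averaging across the $x$ parallel copies indexed by $p \in [x]$. First I would apply Lemma~\ref{lem:proper} to replace $H$ by a proper $k$-spanner $H'$ with $|H'| \le 6|H|$; recall that being proper means $H'$ contains no edge of $E_{\tilde G}$, so by Lemma~\ref{lem:span1} every superedge copy $(s^p_{i k_A}, t^p_{j k_B}) \in E^p_{\tilde G}$ is spanned by a \emph{canonical} path in $H'$. A canonical path for that edge uses some Min-Rep edge $(u,w) \in E$ with $u \in A_i$, $w \in B_j$ and $(\gamma_A(u),\gamma_B(w)) \in \pi_{(i,j)}$, together with the $E_{sA}$-edge $(s^p_{i1}, u)$ and the $E_{tB}$-edge $(w, t^p_{j1})$. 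So from $H'$ being a proper spanner we extract, for every $i \in A$, $j \in B$, $p \in [x]$ with $(i,j) \in \tilde E$, a witness vertex in $A_i$ and a witness vertex in $B_j$ joined by an edge of $E$.

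The key idea is that there are $x = n^2/\tilde n$ copies, and across these copies the spanner cannot afford to use a different canonical path for each one unless it is large. Concretely, for each supernode $i \in A$ define its \emph{effective label set} in $H'$ to be the set of $u \in A_i$ such that the $E_{sA}$-edge $(s^p_{i1},u)$ lies in $H'$ for some $p$; similarly define effective label sets on the $B$ side. For a fixed superedge $(i,j)$, each of the $x$ canonical paths (one per copy $p$) picks a pair $(u,w)$ with $u$ in $i$'s effective label set, $w$ in $j$'s effective label set, $(u,w)\in E$, and $(u,w)$ satisfying $\pi_{(i,j)}$ — and in particular that single pair $(u,w)$ certifies that the superedge $(i,j)$ is covered once we put $u$ and $w$ into the REP-cover. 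So the REP-cover we output is simply: $S_{\mathrm{rep}} = \{$all $u \in A_i$ with $(s^p_{i1},u) \in H'$ for some $p\} \cup \{$all $w \in B_j$ with $(w,t^p_{j1}) \in H'$ for some $p\}$. By Lemma~\ref{lem:span1}/properness this is a valid REP-cover, since every superedge has a witnessing satisfied Min-Rep edge both of whose endpoints were included.

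It remains to bound $|S_{\mathrm{rep}}|$. Each vertex $u \in A_i$ that we include is charged to at least one edge of $E_{sA}$ present in $H'$, namely $(s^p_{i1},u)$; and distinct included vertices (even across different $i$) are charged to distinct $E_{sA}$-edges, since the $A_i$ are disjoint and the copies $[x]$ partition the rest. Likewise on the $B$ side with $E_{tB}$. Hence $|S_{\mathrm{rep}}| \le |H' \cap (E_{sA} \cup E_{tB})| \le |H'| \le 6|H|$, which is off by the factor $x$ from what we want. To get the extra $1/x$ factor, I would instead fix the single copy $p^* \in [x]$ that minimizes the number of $E_{sA} \cup E_{tB}$ edges of $H'$ incident to copy $p^*$ (i.e.\ of the form $(s^{p^*}_{i1},\cdot)$ or $(\cdot,t^{p^*}_{j1})$); by averaging this is at most $|H'|/x \le 6|H|/x$. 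The point is that being proper forces \emph{every} copy's superedges to be canonically spanned \emph{using that copy's own $s^p,t^p$ ladder}, so the $E_{sA}$- and $E_{tB}$-edges used by copy $p^*$'s canonical paths already form the cheap REP-cover: $S_{\mathrm{rep}} := \{u : (s^{p^*}_{i1},u)\in H'\} \cup \{w : (w,t^{p^*}_{j1})\in H'\}$ has size at most $6|H|/x$ and is still a valid REP-cover because copy $p^*$ alone already contains a canonical path for every superedge.

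The main obstacle I anticipate is precisely this last bookkeeping step — making sure that restricting attention to the single best copy $p^*$ still yields a \emph{valid} REP-cover, i.e.\ that for every superedge $(i,j)$ the canonical path in copy $p^*$ really does use only $E_{sA}$-edges incident to $s^{p^*}$-vertices and only $E_{tB}$-edges incident to $t^{p^*}$-vertices (so that the charging is clean and no cross-copy edges leak in), together with the disjointness argument that lets the averaging bound $|H'|/x$ go through without double-counting. Everything else (applying Lemma~\ref{lem:proper}, reading off witnesses from canonical paths) is routine given the structural Lemmas~\ref{lem:span1} and~\ref{lem:proper}.
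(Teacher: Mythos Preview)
Your proposal is correct and is essentially the paper's own proof: convert $H$ to a proper spanner $H'$ via Lemma~\ref{lem:proper}, for each copy $p$ take $U^p=\{u:(s^{p}_{i1},u)\in H'\}\cup\{w:(w,t^{p}_{j1})\in H'\}$, observe each $U^p$ is a valid REP-cover because every edge of $E^p_{\tilde G}$ is canonically spanned, and pick the smallest $U^p$ by averaging. Your final worry is unfounded: by the very definition of a canonical path for $(s^p_{ik_A},t^p_{jk_B})$ given in the paper, the path is $s^p_{ik_A},\dots,s^p_{i1},u,w,t^p_{j1},\dots,t^p_{jk_B}$, so the $E_{sA}$- and $E_{tB}$-edges it uses are automatically incident to copy~$p$ and no cross-copy leakage can occur.
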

\begin{proof}
We first apply Lemma~\ref{lem:proper} to get a proper $k$-spanner $H'$ of size at most $6 |H|$.  Now for every $p \in [x]$ and $i \in A$ let $S^p_i = \{u \in A_i : (s^p_{i1}, u) \in E(H')\}$, and similarly for $j \in B$ let $T^p_j = \{w \in B_j : (w, t^p_{j1}) \in E(H')\}$.  Now for each $p \in [x]$ let $U^p = (\cup_{i \in A} S^p_i) \cup (\cup_{j \in B} T^p_j)$.  We claim that for every $p \in [x]$, the set $U^p$ is a valid REP-cover.  This is enough to prove the lemma, since clearly the smallest $U^p$ has size at most $6|H| / x$ by averaging (each vertex in each $U^p$ can be charged to the edge in $E_{sA}$ or $E_{tB}$ that caused it to be in $U^p$).

Consider a superedge $(i,j) \in \tilde E$.  Since $H'$ is a proper $k$-spanner, it contains a canonical path from $s^p_{i k_A}$ to $t^p_{j k_B}$.  By the definition of canonical path, this implies that it includes the edges $(s^p_{i1}, u), (u, w)$, and $(w, t^p_{j1})$ for some $u \in A_i$ and $w \in B_j$.  Thus $u \in S^p_i$ and $w \in T^p_j$ with $(u,w) \in E$.  Since this is true for every superedge, it implies that $U^p$ is a valid REP-cover.
\end{proof}

We now want to prove the other direction, that the existence of a small REP-cover for $G$ implies a small $k$-spanner for $G'$.

\begin{lemma} \label{lem:reduction2}
Given a REP-cover $C$ for $G$, we can construct in polynomial time a $k$-spanner $H$ of $G'$ with at most $(k+1)x|C|$ edges.
\end{lemma}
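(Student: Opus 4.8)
The plan is to take the given REP-cover $C$ for $G$ and build the $k$-spanner $H$ of $G'$ by including, for each $p \in [x]$, exactly the edges needed to create a canonical path for every superedge, together with the ``backbone'' edges $E_M$ that are used by all canonical paths. Concretely, I would set $H = E_M \cup E_C$, where $E_C$ consists, for each $p \in [x]$ and each $i \in A$, of the edges $(s^p_{i1}, u)$ for every $u \in C \cap A_i$, and similarly for each $j \in B$ the edges $(w, t^p_{j1})$ for every $w \in C \cap B_j$. First I would bound the size: $|E_M| = x(\tilde n/2)(k_A - 1) + x(\tilde n/2)(k_B - 1) \le x \tilde n (k-1)/2 \le x\tilde n k$, and $|E_C| = x|C|$ since each of the $x$ copies contributes exactly $|C|$ edges (one per representative). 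Since $C$ is a REP-cover it covers every superedge, so in particular $|C| \ge \tilde n$ (every supervertex needs at least one representative to cover its incident superedges — recall $\tilde G$ has no isolated vertices), hence $x\tilde n k \le xk|C|$ and $|H| \le (k+1)x|C|$ as claimed. (If one is worried about the edge-set accounting, a slightly more careful split of $|E_M|$ and $|E_C|$ gives the same bound; this is the kind of routine arithmetic I would not belabor.)

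Next I would verify that $H$ is genuinely a $k$-spanner of $G'$, i.e.\ every edge of $E' = E \cup E_M \cup E_{sA} \cup E_{tB} \cup E_{\tilde G}$ is spanned within distance $k$ in $H$. Edges of $E_M$ are in $H$, so self-spanned. An edge $(s^p_{i1}, u) \in E_{sA}$: since $C$ is a REP-cover, $C \cap A_i$ is nonempty, so $H$ contains $(s^p_{i1}, u')$ for some $u' \in A_i$; then $s^p_{i1}, u', \dots$ — wait, $u$ and $u'$ are both in $A_i$ but need not be adjacent in $G'$. Here I would instead route through the Min-Rep graph or, more simply, observe that it suffices to include in $E_C$ the edge $(s^p_{i1}, u)$ for \emph{every} $u \in A_i$ for one distinguished copy, exactly as in the definition of $\hat E$ in Lemma~\ref{lem:proper}; alternatively route $u \to w \to u$ is not short either. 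The cleanest fix, which I would adopt, is: the edges $E_{sA}$ and $E_{tB}$ are spanned by noting that $u$ and $s^p_{i1}$ are both adjacent (in $H$, via $E_C$ and the analogous $T$-side edges) to a common neighbor structure — but to avoid this wrinkle entirely I would simply also throw $\hat E$ (as defined before Lemma~\ref{lem:proper}) into $H$, which costs only an extra $n + x\tilde n \le$ lower-order amount and makes every $E_{sA}, E_{tB}$ edge $3$-spanned; the statement's bound $(k+1)x|C|$ still holds after re-checking the arithmetic since $n = |C| \cdot (\text{stuff})$... — I would need to confirm $n + x\tilde n$ fits, and indeed $x\tilde n \le x|C|$ and $n \le x|C|$ as well when $|C|\ge \tilde n$ and using $x = n^2/\tilde n$, so it is absorbed. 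Edges of $E \subseteq G$: each such Min-Rep edge $(u,w)$ with $u \in A_i, w \in B_j$ is spanned because $(i,j) \in \tilde E$, and the canonical path machinery gives a short route; or again, simplest to note $E$ edges can be handled by including $E$ itself is too expensive ($n^2$), so here I really do need: $u,w$ are spanned via $s^p_{i1}$ and $t^p_{j1}$ only if $u \in C$, which need not hold. So $E$-edges must be spanned by a path of length $\le k$ not using $E$; the path $u, s^p_{i1}, \dots$ requires $(s^p_{i1},u) \in H$, i.e.\ $u \in A_i$ arbitrary — again handled by the $\hat E$-type ``all of $A_i$ to copy $1$'' edges. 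Thus including $\hat E$ resolves all of $E_{sA}, E_{tB}, E$ simultaneously.

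The main content is the last case: every superedge $(i,j) \in \tilde E$ gives rise to edges $(s^p_{i k_A}, t^p_{j k_B}) \in E^p_{\tilde G}$ for each $p$, and I must span each within distance $k$ in $H$. Here is where the REP-cover property does the work: since $C$ is a REP-cover, there exist $u \in C \cap A_i$ and $w \in C \cap B_j$ with $(u,w) \in E$. By construction $H$ contains $(s^p_{i1}, u)$ and $(w, t^p_{j1})$ and the entire $E_M$ backbone, and $(u,w) \in E \subseteq H$ once we include $\hat E$? — no, $(u,w)$ is a Min-Rep edge, not an $\hat E$ edge, so I must explicitly add it. Revised plan: let $H = E_M \cup \hat E \cup E_C \cup E_{C}'$ where $E_C'$ is, for each $p$ and each superedge $(i,j)$, one chosen covering edge $(u,w) \in E$ with $u \in C\cap A_i$, $w\in C\cap B_j$. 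Then the canonical path $s^p_{ik_A}, \ldots, s^p_{i1}, u, w, t^p_{j1}, \ldots, t^p_{jk_B}$ of length exactly $k$ lies in $H$, spanning the $E^p_{\tilde G}$-edge. The added $|E_C'| \le x|\tilde E|$ is absorbed since $|\tilde E| \le \tilde n d$ — hmm, that may be large; the honest accounting is that $E_C'$-edges can be charged to representatives in $C$: each $u \in C \cap A_i$ serves all superedges $(i,j)$ it covers, so $|E_C'| \le x \cdot |\tilde E|$ which need \emph{not} be $O(|C|)$. I expect \textbf{this} to be the real obstacle, and I anticipate the author's actual construction avoids adding $E$-edges per-superedge by a smarter choice — perhaps by including, for each $p$ and each representative $v \in C$, all Min-Rep edges incident to $v$ whose other endpoint is also in $C$; the count of such edges, summed over $v \in C$, is at most $|C| \cdot \Delta$ where $\Delta$ is the Min-Rep degree, and one shows $\Delta$ is a constant (coming from the $3$-SAT$(5)$ structure, since $|\Sigma_A|, |\Sigma_B|$ are constants before parallel repetition but not after...). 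So the crux is a careful edge-count showing all the auxiliary edge sets sum to at most $(k+1)x|C|$, using $|C| \ge \tilde n$ and the specific structure; I would devote the bulk of the proof to that inequality, and the spanning verification itself is then just exhibiting the canonical path, which is immediate from $E_M \subseteq H$ plus the covering edges being present.
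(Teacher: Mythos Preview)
Your proposal has a genuine gap, and it is exactly the point where you wrote ``including $E$ itself is too expensive ($n^2$).'' In fact the paper's construction simply throws \emph{all} of $E$ into $H$, and this is affordable: recall $x = n^2/\tilde n$, so $|E| \le n^2 = x\tilde n \le x|C|$ since any REP-cover satisfies $|C| \ge \tilde n$. This single observation dissolves every difficulty you encountered. With $E \subseteq H$ the Min-Rep edges are self-spanned, and the canonical path for each $E_{\tilde G}$-edge uses one actual edge $(u,w) \in E$, which is now present without any per-superedge bookkeeping; there is no need for your set $E_C'$ at all, and hence no need to worry about $|\tilde E|$ or the Min-Rep degree.

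Concretely, the paper takes $H$ to be the union of $E$, $E_M$, $\hat E$, and the ``representative'' edges $\{(s^p_{i1},u) : u \in A_i \cap C,\, p\in[x]\} \cup \{(w,t^p_{j1}) : w \in B_j \cap C,\, p\in[x]\}$. The spanning verification is then exactly as you sketched: $E$ and $E_M$ are self-spanned, $E_{sA}$ and $E_{tB}$ are $3$-spanned via $\hat E$, and each $E_{\tilde G}$-edge gets a canonical path because the REP-cover guarantees $u\in C\cap A_i$, $w\in C\cap B_j$ with $(u,w)\in E$. The size bound is the arithmetic $x|C| + n^2 + x(\tilde n/2)(k-3) + (n + x\tilde n) \le 4x|C| + x|C|(k-3)/2 \le (k+1)x|C|$, using $n \le n^2 = x\tilde n \le x|C|$ throughout. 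Your instinct to include $\hat E$ was correct; the only missing step was recognizing that the parameter $x$ was chosen precisely so that the full edge set $E$ also fits in the budget.
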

\begin{proof}
The edge set of our $k$-spanner $H$ is
\begin{equation*}
\{(s^p_{i1}, u) : i \in A, u \in A_i \cap C, p \in[x]\} \cup \{(t^p_{j1}, w) : j \in B, w \in B_j \cap C, p \in [x]\} \cup E \cup E_M \cup \hat E.
\end{equation*}

To see that $H$ is a $k$-spanner, first note that every edge in $E \cup E_M \cup \hat E$ is spanned by itself.  It is easy to see that any edge in $E_{sA}$ or $E_{tB}$ is spanned by $\hat E$ (in fact, $3$-spanned).  And since $C$ is a valid REP-cover, for any edge in $E_{\tilde G}$ there is a canonical path included in $H$.

Now we need to bound the size of $H$.  Clearly it is at most
\begin{align*}
|H| &= x|C| + |E| + |E_M| + |\hat E|\\
&\leq x|C| + n^2 + x|A|(k_A-1) + x|B|(k_B-1) + (n + x \tilde n)\\
& \leq x|C| + x \tilde n + x(\tilde n / 2)(k_A-1) + x (\tilde n / 2)(k_B-1) + n + x \tilde n \\
&\leq x|C| + x|C| + x(\tilde n / 2)(k-3) + x \tilde n + x \tilde n \\
& \leq 4 x |C| + x |C| \frac{k-3}{2} \\
& \leq (k+1)x|C|,
\end{align*}
where we used the fact that $|C| \geq \tilde n$ and that by definition $x = n^2 / \tilde n$
\end{proof}

We can now prove the main theorem of the paper, that it is hard to approximate the basic $k$-spanner problem.

\begin{theorem} \label{thm:spanner_hardness}
Assuming $NP \not\subseteq BPTIME(2^{polylog(n)})$, for any constant $\epsilon > 0$ and $3 \leq k \leq \log^{1-2\epsilon} n$  there is no polynomial time approximation algorithm for the basic $k$-spanner problem with ratio less than $2^{(\log^{1-\epsilon} n) / k}$.
\end{theorem}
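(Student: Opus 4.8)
The plan is to combine the Min-Rep hardness of Theorem~\ref{maintheorem} with the reduction encoded in Lemmas~\ref{lem:reduction1} and~\ref{lem:reduction2}. Fix a constant $\epsilon > 0$ and $3 \leq k \leq \log^{1-2\epsilon} n$. Starting from a Min-Rep instance with supergirth greater than $k+1$ (we will need supergirth at least $k+2$ to run the spanner reduction, so we simply apply Theorem~\ref{maintheorem} with $k+1$ in place of $k$, which only changes constants in the exponent), we build the graph $G'$ on vertex set $V' = A \cup B \cup S \cup T$ as described, with $x = n^2/\tilde n$. Here $n$ is the size of the Min-Rep graph and $\tilde n$ the size of the supergraph, and from Theorem~\ref{maintheorem} we know it is hard to distinguish the case where the smallest REP-cover of $G$ has size at most $\tilde n$ from the case where it is at least $g \cdot \tilde n$, where $g = 2^{(\log^{1-\epsilon} N)/k}$ and $N$ is the size of the Min-Rep graph (I will write $g$ for this gap throughout).

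First I would handle the completeness (YES) case: if $G$ has a REP-cover $C$ of size at most $\tilde n$, then Lemma~\ref{lem:reduction2} produces a $k$-spanner $H$ of $G'$ with at most $(k+1)x|C| \leq (k+1)x\tilde n$ edges. Next the soundness (NO) case: if every REP-cover of $G$ has size at least $g\tilde n$, then by Lemma~\ref{lem:reduction1} every $k$-spanner $H$ of $G'$ satisfies $6|H|/x \geq g\tilde n$, i.e.\ $|H| \geq g x \tilde n / 6$. Hence a polynomial-time approximation algorithm for basic $k$-spanner with ratio strictly less than $g/(6(k+1))$ would distinguish the two cases, contradicting Theorem~\ref{maintheorem} (under $NP \not\subseteq BPTIME(2^{polylog(n)})$, since the whole construction is polynomial-size in $N$ and the reductions are polynomial-time). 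The last step is bookkeeping on the parameters: the size of $G'$ is $|V'| = n + x(\tilde n/2)(k-1) = \mathrm{poly}(n)$, so $\log |V'| = \Theta(\log n)$, and $k \leq \log^{1-2\epsilon} n$ means the factor $6(k+1)$ is absorbed by replacing $\epsilon$ with a slightly larger constant; this converts the gap $g/(6(k+1))$ back into the clean form $2^{(\log^{1-\epsilon} n')/k}$ where $n'$ is now the size of the spanner instance.

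The main obstacle is not any single deep step — Lemmas~\ref{lem:span1}, \ref{lem:proper}, \ref{lem:reduction1}, \ref{lem:reduction2} already do the heavy lifting — but rather making sure the parameter arithmetic is consistent across three changes of instance size (original 3SAT instance $\to$ parallel-repeated Label Cover/Min-Rep $\to$ spanner graph $G'$) and that the required supergirth "$\geq k+2$" is honestly available. Concretely I would: (i) invoke Theorem~\ref{maintheorem} with parameter $k+1$ to get a Min-Rep instance of supergirth $> k+1$, hence $\geq k+2$; (ii) feed it through the spanner reduction of this section; (iii) track that $\log$ of the final size is $\Theta(\log$ of the Min-Rep size$)$ so the exponent $(\log^{1-\epsilon}\cdot)/k$ is preserved up to the usual $\epsilon$-slack; and (iv) check that multiplying the gap by the polynomial $x$ on both sides of the YES/NO dichotomy leaves the ratio essentially unchanged. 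The one place to be careful is that $x = n^2/\tilde n$ appears as a multiplicative factor on the spanner sizes in \emph{both} directions, so it cancels in the ratio; I would state this explicitly rather than leave it implicit.
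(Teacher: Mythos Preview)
Your proposal is correct and follows essentially the same route as the paper: invoke Theorem~\ref{maintheorem} with parameter $k+1$ to get supergirth at least $k+2$, apply Lemma~\ref{lem:reduction2} in the YES case and Lemma~\ref{lem:reduction1} in the NO case to obtain a gap of $g/(6(k+1))$, and then absorb the $6(k+1)$ factor and the polynomial blow-up in instance size into the $\epsilon$-slack. Your explicit remark that $x$ appears multiplicatively on both sides and therefore cancels in the ratio is exactly the point, and the paper handles the size bookkeeping the same way (noting $|V'|\le 2kn^2$ so $\log|V'|=\Theta(\log n)$).
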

\begin{proof}
Suppose that we have an $\alpha$-approximation algorithm for the basic $k$-spanner problem.  Given an instance $G$ of Min-Rep with supergirth larger than $k+1$, we reduce it to basic $k$-spanner on $G'$ as described above.  If the smallest REP-cover has size $\tilde n$ (i.e.\ we can assign a single label to every vertex of the supergraph and get a valid proof), then by Lemma~\ref{lem:reduction2} there is a $k$-spanner of $G'$ with at most $(k+1)x \tilde n$ edges.  On the other hand, if the smallest REP-cover has size at least $2^{(\log^{1-\epsilon} n) / (k+1)} \cdot \tilde n$ then by Lemma~\ref{lem:reduction1} the smallest $k$-spanner of $G'$ must have size at least $2^{(\log^{1-\epsilon} n) / (k+1)} \cdot \tilde n x / 6$.  By Theorem~\ref{maintheorem} we cannot distinguish between these two cases of Min-Rep, and thus $\alpha \geq (2^{(\log^{1-\epsilon} n) / (k+1)} \tilde n x / 6) / ((k+1)x \tilde n) = (2^{(\log^{1-\epsilon} n) / (k+1)}) / (6(k+1))$.

However, the $n$ used in the above expression is the size of the Min-Rep instance $G$, not the size of the spanner instance $G'$.  Let $n' = |V'|$ be the size of the $k$-spanner instance, and note that $n' \leq n + \tilde n k x = n + n^2 k \leq 2k n^2$.  So $n \geq \sqrt{n'/(2k)}$, and thus we have hardness
\begin{equation*}
\frac{2^{(\log^{1-\epsilon} n) / (k+1)}}{6(k+1)} \geq \frac{2^{(\log^{1-\epsilon} (n'/(2k))) / (2(k+1))}}{6(k+1)}
\end{equation*}

By using an appropriately smaller $\epsilon$, and switching notation to let $n$ represent the size of the $k$-spanner instance, this gives hardness of $2^{(\log^{1-\epsilon} n) / k}$ as claimed (assuming that $k \leq \log^{1-2\epsilon} n$).
\end{proof}

\section{Conclusion}

Motivated by proving hardness for the basic $k$-spanner problem (one of the only spanner problems for which strong hardness was not already known), we gave a proof that Label Cover and Min-Rep are hard to approximate even when restricted to instances with large supergirth.  This result has been claimed before~\cite{ELD}, but their proof was fundamentally flawed by their attempt to increase the girth \emph{before} using parallel repetition.   Our new proof is based on a technique (subsampling edges of $2$-query PCPs) that allows us to sparsify the PCP obtained by parallel repetition enough to destroy most small cycles without significantly losing in the soundness of the PCP (and thus the provable hardness).  This gives a proof that the basic $k$-spanner problem, which is perhaps the simplest of spanner problems, has superpolylogarithmic hardness.

\paragraph{Min-Rep with large girth and large gap?}
An important but perhaps difficult question is if Min-Rep is still hard to approximate on instances with large girth (even if the supergirth is $4$). A solution to this question
would be useful in lower bounding problems such as Multicommodity Buy-at-Bulk,
Multicommodity Cost-Distance (see \cite{STOL}), and other network design problems, and would also lead to simplifications of already known lower bounds.

It may not be possible to obtain hardness for the problem of Min-Rep with large girth,
and in fact it may be that this problem has a good approximation.  For example, one natural way to try to prove this would be to find instances where the girth is at least the supergirth, and then try to repeat our reduction.  A natural case where the girth is at least the supergirth is if the graph induced by $A_i \cup B_j$ is a matching for each superedge $(i,j)$.  The Label Cover version (i.e.~maximization) of this is the Unique Games problem, originally defined by Khot \cite{khot}.  It is easy to see that in this case the girth is at least the supergirth, but unfortunately while Khot conjectured the problem to be hard to approximate (the famous \emph{Unique Games Conjecture}), the conjectured hardness is still quite small and known upper bounds preclude the kind of strong hardness that we would like to prove.  But are there other cases where the girth of the Min-Rep graph is large that are more difficult to approximate?



\bibliographystyle
{abbrv}
{\bibliography{z}
\end{document}